\theoremstyle{plain}
\newtheorem{theorem}{Theorem}
\newtheorem{lemma}[theorem]{Lemma}
\newtheorem{corollary}[theorem]{Corollary}
\newtheorem{conjecture}[theorem]{Conjecture}
\theoremstyle{definition}
\newtheorem{definition}[theorem]{Definition}
\newtheorem{example}[theorem]{Example}
\newcommand{\calA}{\mathcal{A}}
\newcommand{\calB}{\mathcal{B}}
\newcommand{\calL}{\mathcal{L}}
\newcommand{\calM}{\mathcal{M}}
\newcommand{\compl}{^\mathrm{c}}
\newcommand{\mybar}[1]{\overline{#1}}
\newcommand{\Nat}{\mathbb{N}}
\newcommand{\Int}{\mathbb{Z}}
\newcommand{\Rat}{\mathbb{Q}}
\newcommand{\id}{{\mathrm{id}}}
\newcommand{\powset}{\mathcal{P}}
\newcommand{\union}{\cup}
\newcommand{\sop}{[} 
\newcommand{\scl}{]} 
\newcommand{\sel}[2]{#1 \backslash #2} 
\newcommand{\unsubst}[2]{\sop \sel{#1}{#2} \scl} 
\newcommand{\impl}{\rightarrow} 
\newcommand{\liff}{\leftrightarrow} 
\newcommand{\Land}{\bigwedge}
\newcommand{\Lor}{\bigvee}
\newcommand{\proves}{\vdash}
\newcommand{\lfp}{\mathrm{lfp}} 
\newcommand{\hoare}[3]{\{#1\}#2\{#3\}}
\newcommand{\vc}{\mathrm{vc}}
\newcommand{\vct}{\tilde{\vc}}
\newcommand{\hskipit}{\textbf{skip}}
\newcommand{\hif}{\textbf{if}~}
\newcommand{\hthen}{~\textbf{then}~}
\newcommand{\helse}{~\textbf{else}~}
\newcommand{\hwhile}{\textbf{while}~}
\newcommand{\hdo}{~\textbf{do}~}
\newcommand{\wpc}{\mathrm{wp}}
\newcommand{\spc}{\mathrm{sp}}
\newcommand{\Laff}{\calL_{\mathrm{aff}}}
\newcommand{\Aff}{{\mathrm{Aff}~}}
\newcommand{\aff}{{\mathrm{aff}}}
\newcommand{\modelsa}{\models_\mathrm{a}} 
\title{A Fixed-point Theorem for Horn Formula Equations}
\author{Stefan Hetzl \qquad\qquad Johannes Kloibhofer\thanks{Supported by the Vienna Science
 and Technology Fund (WWTF) as part of the Vienna Research Group VRG 12-004}
\institute{Institute of Discrete Mathematics and Geometry}
\institute{TU Wien\\
Vienna, Austria}
\email{stefan.hetzl@tuwien.ac.at \qquad\quad j.kloibhofer@gmx.net}
}
\begin{document}
\maketitle

\begin{abstract}
We consider constrained Horn clause solving from the more general point of view
 of solving formula equations.
Constrained Horn clauses correspond to the subclass of Horn formula equations.
We state and prove a fixed-point theorem for Horn formula equations which is based on expressing
 the fixed-point computation of a minimal model of a set of Horn clauses on the object level
 as a formula in first-order logic with a least fixed point operator.
We describe several corollaries of this fixed-point theorem, in particular concerning the logical
 foundations of program verification, and sketch how to generalise
 it to incorporate abstract interpretations.
\end{abstract}

%

\section{Introduction}

Constrained Horn clauses are a versatile and practical formalism for representing and solving
 a variety of problems in program verification and model checking~\cite{Bjorner15Horn,Gurfinkel19Science}.
In this paper we approach constrained Horn clause solving from a theoretical point of view.
In logic, related problems have a long history:
solving Boolean equations goes back to the 19th century and was already investigated
 in \cite{Schroeder90Vorlesungen}, see~\cite{Rudeanu74Boolean} for a comprehensive textbook.
Solving Boolean equations is closely related to Boolean unification, a subject of thorough study in
 computer science, see, e.g., \cite{Martin89Boolean} for a survey.
The generalisation of this problem from propositional to first-order logic has been made explicit
 as early as~\cite{Behmann50Aufloesungsproblem,Behmann51Aufloesungsproblem}.
Solving a formula equation in first-order logic is closely related to second-order quantifier elimination,
 a problem with applications in a variety of areas in computer science, e.g., databases or common-sense
 reasoning~\cite{Gabbay08Second}.
A seminal work on second-order quantifier elimination and basis of several algorithms still in use today
 is Ackermann's~\cite{Ackermann35Untersuchungen}.
See~\cite{Wernhard17Boolean} for a recent survey of this area of related problems.

Solving constrained Horn clauses is closely related to solving formula equations.
In fact, constrained Horn clauses correspond to a natural class of formula
 equations which we will call Horn formula equations.
This relationship allows for an elegant theoretical description of the connections between
 problems considered from Ackermann to contemporary verification.

Fixed-point theorems play an important role for solving equations in many areas of mathematics.
In recursion theory, a subject with close ties to verification, the recursion theorem guarantees the
 existence of a solution of a system of recursion equations by computing a fixed point.
But also in areas quite remote from verification similar constructions can be found, as in the use of Banach's
 fixed-point theorem in the proof of the
 Picard-Lindelöf theorem on the unique solvability of ordinary differential equations.
In constrained Horn clause solving we have a comparable situation: it is a well-known result from
 logic programming and constraint logic programming that every set of Horn clauses has a unique minimal model
 (in the sense of identifying a model with the ground atoms true in it) and that this minimal model
 can be computed as the fixed-point of an operator induced by the clause set~\cite{VanEmden76Semantics,Jaffar94Constraint}.
 
In this paper we formulate and prove a fixed-point theorem for Horn formula equations which (essentially)
 makes the construction of the minimal model explicit {\em in the logic}.
Expressing this construction will be achieved by using FO[LFP], first-order logic with a least fixed-point operator,
 thus providing a canonical solution for any Horn formula equation.
The fixed-point theorem has a number of applications:
it helps to explain, at least from a theoretical  point of view, the efficacy of interpolation for Horn
 clause solving and invariant generation.
Moreover, as a simple corollary one can obtain the expressibility of the 
 weakest precondition and the strongest postcondition, and thus the partial correctness of an imperative
 program in FO[LFP].
As another corollary it allows a generalisation of a result by
 Ackermann~\cite{Ackermann35Untersuchungen} on second-order quantifier-elimination in a direction different
 from the recent generalisation~\cite{Wernhard17Approximating}.
A result from a recently introduced approach to inductive theorem proving with tree grammars described
 in~\cite{Eberhard15Inductive} on generating a proof with induction based on partial information about
 that proof can be obtained from our fixed-point theorem as another straightforward corollary.
Last, but not least, an abstract form of the fixed-point theorem, stated here as Conjecture~\ref{conj.afpthm}, would
 permit to considerably simplify the proof of the decidability of affine formula
 equations~\cite{Hetzl20Decidability}.

In Section~\ref{sec.relwork} we relate constrained Horn clause solving with solving formula equations.
The fixed-point theorem is stated and proved in detail in Section~\ref{sec.fpthm}.
Section~\ref{sec.apppv} describes some of its applications to the foundations of program verification.
In Section~\ref{sec.afpthm} we sketch how to generalise our fixed-point theorem to accommodate
 abstract interpretation based on Galois connections.
This paper is an improved presentation of some of the main results of the second authors master's
 thesis~\cite{Kloibhofer20Fixed}.

\section{Constrained Horn clauses and formula equations}\label{sec.relwork}

We use standard notation from logic.
In particular, for a first-order language $\calL$, an $\calL$-structure $\calM$, an $\calL$-formula $\varphi$,
 and an interpretation $I$ of the free variables of $\varphi$ in $\calM$ we write $\calM,I \models \varphi$ to
 express that $\varphi$ is true in $\calM$ under the interpretation $I$ (in the usual sense of Tarski semantics).
Occasionally we will, in a slight abuse of notation, also allow elements of $\calM$ or relations over $\calM$ to
 appear on the right-hand side of $\models$ by which we intend to denote truth in $\calM$ under
 an accordingly modified interpretation $I$.
We write $\models \varphi$ to express that $\varphi$ is a valid formula.
Individual variables will be denoted by lowercase Latin letters $x,y,z,u,v\ldots$. 
Predicate variables will be denoted by uppercase Latin letters $X,Y,\ldots$.
If $X$ is a $k$-ary predicate variable and $\chi$ is a first-order formula with the free variables
 $v_1,\ldots,v_k$ we write $\unsubst{X}{\chi}$ for the substitution of $X$ by $\chi$ inserting the $i$-th
 argument of an $X$-atom for $v_i$.
We write $\sop \sel{X_1}{\chi_1}, \ldots, \sel{X_n}{\chi_n} \scl$ for the
 simultaneous substitution of $X_i$ by $\chi_i$ for $i=1,\ldots,n$.
A substitution $\sop \sel{X_1}{\chi_1},\ldots,\sel{X_n}{\chi_n} \scl$ is
 called {\em first-order substitution} if $\chi_1,\ldots,\chi_n$ are first-order formulas.
The logical symbol $\bot$ is a nullary predicate constant which is false in all structures.

Let $\calL$ be a first-order language and let $T$ be an $\calL$-theory.
A {\em constrained Horn clause} is an $\calL$-formula of the form
 $\varphi \land X_1(\mybar{t_1}) \land \cdots \land X_n(\mybar{t_n}) \impl Y(\mybar{s})$ or
 $\varphi \land X_1(\mybar{t_1}) \land \cdots \land X_n(\mybar{t_n}) \impl \bot$
 where $\mybar{t_1},\ldots,\mybar{t_n},\mybar{s}$ are
 tuples of first-order terms of appropriate arity and $\varphi$ is a first-order formula, i.e., a formula not containing
 predicate variables.
Note that a constrained Horn clause is allowed to (and typically does) contain free individual variables which,
 as usual in clause logic, are treated as universally quantified.
A finite set $S$ of constrained Horn clauses is considered as the conjunction of these clauses and is thus
 logically equivalent to a formula of the form $\forall^* \Land_{C\in S} C$ where $\forall^*$ denotes
 the universal closure w.r.t.\ individual variables.
We are interested in solving a given finite set of constrained Horn clauses.
There are different notions of solvability in the literature which we will discuss in detail below.
 
In this paper we embed constrained Horn clauses in the more general framework of formula equations.
In the context of logical formulas, we consider an equation to be 
\begin{equation}\label{eq.feqtwosided}
\text{an $\calL$-formula}\ \varphi_1 \liff \varphi_2\ \text{containing predicate variables $\mybar{X} = X_1,\ldots,X_n$.}
\end{equation}
A solution of~(\ref{eq.feqtwosided}) is a first-order substitution $\unsubst{\mybar{X}}{\mybar{\chi}}$
 s.t.\ $\models (\varphi_1 \liff \varphi_2) \unsubst{\mybar{X}}{\mybar{\chi}}$.
A solution in $T$ has to satisfy $T \models (\varphi_1 \liff \varphi_2)\unsubst{\mybar{X}}{\mybar{\chi}}$ instead.
Since the concept of solution in $T$ subsumes that of solution (by setting $T = \emptyset$) we will only use the
 former.
We can simplify equations to instead considering
\begin{equation}\label{eq.feqonesided}
\text{an $\calL$-formula}\ \varphi\ \text{containing predicate variables $\mybar{X} = X_1,\ldots,X_n$.}
\end{equation}
Then, as a solution in $T$, we ask for a first-order substitution $\unsubst{\mybar{X}}{\mybar{\chi}}$ s.t.\ 
 $T \models \varphi\unsubst{\mybar{X}}{\mybar{\chi}}$.
Note that every instance of~(\ref{eq.feqtwosided}) is an instance of~(\ref{eq.feqonesided}) by letting $\varphi$ be $\varphi_1 \liff \varphi_2$
 and every instance of~(\ref{eq.feqonesided}) is an instance of~(\ref{eq.feqtwosided}) by letting $\varphi_1$ be $\varphi$ and 
 $\varphi_2$ be $\top$.
Moreover, it will be notationally useful to explicitly indicate the predicate variables by 
 existential quantifiers.
Consequently we define:
\begin{definition}
A {\em formula equation} is a closed $\calL$-formula $\exists \mybar{X}\, \varphi$ where
 $\varphi$ contains only first-order quantifiers.
A {\em solution} of $\exists \mybar{X}\, \varphi$ in $T$ is a first-order substitution
 $\sop\sel{\mybar{X}}{\mybar{\chi}}\scl$ s.t.\  $T \models \varphi\unsubst{\mybar{X}}{\mybar{\chi}}$.
\end{definition}
The problem of computing a solution to a formula equation given as input will be denoted as FEQ in the sequel.
A formula equation $\exists \mybar{X}\, \varphi$ is called {\em valid} if it is a valid second-order formula
 and {\em satisfiable} if it is a satisfiable second-order formula.
If $S$ is a set of constrained Horn clauses in the predicate variables $\mybar{X} = X_1,\ldots,X_n$, then
 $\exists \mybar{X} \forall^* \Land_{C\in S} C$ will be called {\em Horn formula equation}.
Thus constrained Horn clauses correspond to existential second order Horn logic, which also plays a
 significant role in finite model theory, see~\cite{Graedel91Expressive}.

There are different notions of solvability for constrained Horn clauses in the literature:
{\em satisfiability} of~\cite{Gurfinkel19Science} is satisfiability of a Horn formula equation,
{\em semantic solvability} of~\cite{Ruemmer13Classifying} is validity of a Horn formula equation, and
{\em syntactic solvability} of~\cite{Ruemmer13Classifying} is solvability of a Horn formula equation.
In this paper we will primarily be interested in this last notion: solvability of a Horn formula equation.
Every solvable formula equation is valid and every valid formula equation is satisfiable but neither of
 the converse implications are true as the following example shows.
\begin{example}\label{ex.feq}
If $\varphi$ is a first-order formula which is satisfiable but not valid and does not contain $X$
 then, trivially, $\exists X\, \varphi$ is a formula equation which is satisfiable but not valid.
 
Towards an example for a valid but unsolvable Horn formula equation we work in the first-order language
 $\calL = \{ 0/0, s/1 \}$.
Let $A_1$ be $\forall x\, s(x)\neq 0$ and let $A_2$ be $\forall x\forall y\, (s(x)=s(y)\impl x=y)$
 and consider the formula
\[
A_1 \land A_2 \impl \exists X \exists Y \forall u\, \Big( X(0) \land Y(s(0)) \land (X(u) \impl Y(s(u)))
 \land (Y(u) \impl X(s(u))) \land \neg (X(u) \land Y(u)) \Big)
\]
which, up to some simple logical equivalence transformations, is a Horn formula equation $\Phi$.
Now $\Phi$ is valid since, in a model $\calM$ of $A_1 \land A_2$, interpreting $X$ by
 $\{ s^{2n}(0) \mid n \in \Nat \}$ and $Y$ by $\{ s^{2n+1}(0) \mid n \in \Nat \}$ makes the remaining
 formula true.

For unsolvability suppose that $\Phi$ has a solution $\sop \sel{X}{\chi(u)}, \sel{Y}{\psi(u)} \scl$,
 then, since the standard model $\Nat$ in the language $\calL$ satisfies $A_1\land A_2$, we would have
\[
 \Nat \models \chi(0)\land \psi(s(0)) \land \forall u\, (\chi(u) \impl \psi(s(u))) \land \forall u\, (\psi(u) \impl \chi(s(u))) \land \forall u\, \neg (\chi(u) \land \psi(u)),
\]
in particular $\chi$ would be a definition of the even numbers.
However, the theory of $\Nat$ in $\calL$ admits quantifier elimination~\cite[Theorem 31G]{Enderton01Mathematical}
 which has the consequence that the $L$-definable sets in $\Nat$ are the finite and co-finite
 subsets of $\Nat$~\cite[Section 3.1, Exercise 4]{Enderton01Mathematical} and thus we obtain a contradiction to $\chi$ being
 a definition of the even numbers (which is neither finite nor co-finite).
\end{example}

Solving formula equations (FEQ) is closely related to the problem of second-order quantifier elimination (SOQE):
given a formula $\exists \mybar{X}\, \varphi$ where $\varphi$ contains only first-order quantifiers find
 a first-order formula $\psi$ s.t.\ $\models \exists \mybar{X}\, \varphi \liff \psi$.
The relationship between FEQ and SOQE often hinges on a third problem: second-order quantifier elimination
 by a witness (WSOQE): given a formula $\exists \mybar{X}\, \varphi$ where $\varphi$ contains only first-order 
 quantifiers find a first-order substitution $\unsubst{\mybar{X}}{\mybar{\chi}}$
 s.t.\ $\models \exists \mybar{X}\, \varphi \liff \varphi\unsubst{\mybar{X}}{\mybar{\chi}}$,
 see~\cite{Wernhard17Boolean} for more details.
This complex of problems has a long history in logic and a wealth of applications in computer science,
see the textbook~\cite{Gabbay08Second} on second-order quantifier elimination.
A number of algorithms for second-order quantifier elimination have been developed, for example:
The SCAN algorithm introduced in~\cite{Gabbay92Quantifier} (tries to) compute(s) a first-order formula equivalent to $\exists X\, \varphi$
 for a conjunctive normal form $\varphi$ by forming the closure of $\varphi$ under constraint resolution which
 only resolves on $X$-literals.
The DLS algorithm has been introduced in~\cite{Doherty97Computing} and consists essentially of
 formula rewriting steps tailored to allow application of Ackermann's lemma (which instantiates a predicate
 variable provided some conditions on the polarity of its occurrences are met).

\section{The fixed-point theorem}\label{sec.fpthm}

It is well-known that a set of Horn clauses has a minimal model and that it can be obtained as
 least fixed point of an operator induced by the clause set.
In order to integrate this insight seamlessly into the framework of formula equations we will express
 it on the object level by representing this least fixed-point by means of an explicit
 least fixed-point operator.
An adequate tool to that end is first-order logic with least fixed points, FO[LFP], which plays an
 important role in finite model theory and descriptive complexity~\cite{Immerman99Descriptive}.

In order to introduce FO[LFP] we first define, by induction on a formula $\varphi$, what it means
 for a predicate variable $X$ to occur positively (negatively) in $\varphi$:
\begin{enumerate}
\item If $\varphi$ is an atom of the form $X(\mybar{t})$ then $X$ occurs positively in $\varphi$.
\item $X$ occurs positively (negatively) in $\neg \varphi$ iff $X$ occurs negatively (positively) in $\varphi$.
\item For $\circ \in \{ \land, \lor \}$, $X$ occurs positively (negatively) in $\varphi \circ \psi$ iff
 $X$ occurs positively (negatively) in $\varphi$ or $X$ occurs postively (negatively) in $\psi$.
\item $X$ occurs positively (negatively) in $\varphi \impl \psi$ iff $X$ occurs positively (negatively)
 in $\psi$ or $X$ occurs negatively (positively) in $\varphi$.
\item For $Q \in \{ \forall, \exists \}$, $X$ occurs positively (negatively) in $Q x\, \varphi$ iff
 $X$ occurs positively (negatively) in $\varphi$.
\end{enumerate}
\begin{example}
$X$ occurs positively but not negatively in $\forall u\, (P(u) \impl X(u))$.
$X$ occurs both positively and negatively in $\forall u\, (X(u) \impl X(s(u)))$.
\end{example}
FO[LFP] is first-order logic, augmented with a least fixed point operator $\lfp$ which allows
 to add new predicates to the logic that have the form $[\lfp_X\, \varphi(X,\mybar{u})]$ where
 $\varphi$ is a formula in which $X$ occurs only positively
 and the length of $\mybar{u}$ is the arity of $X$.
Then $\varphi$ defines a monotone function $F_\varphi: \powset(M)^k \to \powset(M)^k$ and since
 the power set lattice $\powset(M)^k$ is complete, the Knaster-Tarski theorem applies and the
 least fixed point of $F_\varphi$ is well-defined.
The predicate $[\lfp_X\, \varphi(X,\mybar{u})]$ is interpreted as that least fixed point.
\begin{example}
Working in the language $\{E/2\}$ of graphs, let $X$ be a binary predicate variable and define
\[
\varphi(X,u,v)\quad\equiv\quad u = v \lor \exists w (X(u,w) \wedge E(w,v)).
\]
As $X$ occurs only positively in $\varphi$ we can form $[\lfp_X\, \varphi(X,u,v)]$ and
 observe that $[\lfp_X\, \varphi(X,u,v)](a,b)$ is true iff there is a path from $a$ to $b$.
\end{example}
In this paper we will consider first-order logic with an operator for simultaneous least fixed points
 (which corresponds to introducing mutual recursion).
Then we require a tuple $\Phi = (\varphi_i(X_1,\ldots,X_n,\mybar{u_i}))_{i=1}^n$ of formulas containing the
 $X_i$ only positively where the length of $\mybar{u_i}$ is the arity $k_i$ of $X_i$.
For a structure $\calM$ with domain $M$ we define 
\begin{align*}
F_i:\quad  M^{k_1} \times \cdots \times M^{k_n} &\rightarrow M^{k_i},\\
(R_1,\ldots,R_n) &\mapsto \{\mybar{a} \in M^{k_i} \mid \calM \models \varphi_i(R_1,\ldots,R_n,\mybar{a}) \}.
\end{align*}
and the operator $F_{\Phi} = (F_1,...,F_n) : M^{k_1} \times \cdots \times M^{k_n} \to M^{k_1} \times \cdots \times M^{k_n}$.
Since the $X_i$ occur only positively in the $\varphi_j$, the operator $F_\Phi$ is monotone and, again,
 the Knaster-Tarski theorem applies.
Thus we obtain new predicates $[\lfp_{X_i} \Phi]$ for $i=1,\ldots,n$ which are interpreted by the $i$-th
 component of the least fixed point of the operator $F_\Phi$.
For more details, the reader is referred to~\cite{Dawar02Fixed}.

Let $\exists X_1 \cdots \exists X_n\, \psi$ be a Horn formula equation.
We distinguish three different types of clauses in $\psi$:
%
\begin{align*}
	\begin{array}{crl}
	(B) &\varphi &\impl X_{i_0}(\mybar{s}),   \\
	(I) &\varphi \land X_{i_1}(\mybar{t_1}) \land \cdots \land X_{i_m}(\mybar{t_m}) &\impl X_{i_0}(\mybar{s}),  \\
	(E) &\varphi \land X_{i_1}(\mybar{t_1}) \land \cdots \land X_{i_m}(\mybar{t_m}) &\impl \bot, 
	\end{array}
\end{align*}
where the constraint $\varphi$ is a formula in $\calL$ not containing a predicate variable,
 $m \geq 1$, $\mybar{t_1},..,\mybar{t_m},\mybar{s}$ are
 tuples of first-order terms in $\calL$ of appropriate arity and $i_0, i_1,\ldots,i_m \in \{1,\ldots,n\}$. 
Note that free variables $\mybar{y}$ may occur in the formulas $\varphi$ and the terms
 $\mybar{s}, \mybar{t_1},\ldots,\mybar{t_m}$.
We call the first base clauses, the second induction clauses, and the third end clauses.
The idea now is to build an inductive relation from the base and induction clauses for every
 formula variable.
For $j=1,\ldots,n$ let $B_j$ and $I_j$ be the sets of clauses of the form (B) and (I), respectively,
 where $i_0 = j$.
In order to abbreviate notation we write $ \iota := (i_1, \ldots ,i_m)$ and $\tau := (\mybar{t_1}, \ldots ,\mybar{t_m})$.
A clause in $I_j$ is determined by the tuple $(\varphi,\iota,\tau,\mybar{s})$, thus we write
 $(\varphi,\iota,\tau,\mybar{s})$ for the clause
 $\varphi \land X_{i_1}(\mybar{t_1}) \land \cdots \land X_{i_m}(\mybar{t_m}) \impl X_{j}(\mybar{s})$
 in $I_j$.
Analogously we write $(\varphi,\mybar{s})$ for the clause $\varphi \impl X_{j}(\mybar{s})$ in $B_j$.
\begin{definition}
Let $\exists X_1 \cdots \exists X_n\, \psi$ be a Horn formula equation.
Define the $n$-tuple $\Phi_\psi = (\varphi_j(X_1,\ldots,X_n,\mybar{x_j}))_{j=1}^n$ where, for $j=1,\ldots,n$,
\[
\varphi_j(X_1,\ldots,X_n,\mybar{x_j})
\quad \text{is}\quad
\exists \mybar{y} \left(
 \Lor_{(\varphi,\mybar{s}) \in B_j} \left(\varphi \land \mybar{x_j} = \mybar{s}\right) \lor
 \Lor_{(\varphi,\iota,\tau,\mybar{s}) \in I_j} \left(\varphi \land \Land_{k = 1}^{m}  X_{i_k}(\mybar{t_k})
 \land \mybar{x_j} = \mybar{s} \right)\right)
\]
where $\mybar{y}$ are the free variables of the clauses in $B_j \union I_j$ and $\mybar{x_j}$ is a
 tuple of variables s.t. $|\mybar{x_j}|$ equals the arity of $X_j$.
\end{definition}
From the point of view of (constraint) logic programming, the above tuple of formulas is a first-order definition
 of the operator $T_P$ induced by $\exists \mybar{X}\, \psi$ when considered as a constraint logic program $P$,
 see, e.g.,~\cite{Jaffar94Constraint}.
Note that $X_1,\ldots,X_n$ only occur positively in $\Phi_{\psi}$, hence we can introduce the simultaneous
 fixed-point formulas $[\lfp_{X_j} ~\Phi_{\psi}]$ for $j \in \{1,\ldots,n\}$.

\begin{lemma}\label{lem.HornFP}
Let $\exists X_1\cdots\exists X_n\, \psi$ be a Horn formula equation and $\mu_j := [\lfp_{X_j} ~\Phi_{\psi}]$
 for $j \in \{1,\ldots,n\}$, then
\begin{enumerate}
\item\label{lem.HornFP.sol} $\models \exists \mybar{X}\,\psi \liff \psi\unsubst{\mybar{X}}{\mybar{\mu}}$ and
\item\label{lem.HornFP.minsol} if  $\calM \models \psi \unsubst{\mybar{X}}{\mybar{R}}$
 for some structure $\calM$ and relations $R_1,\ldots,R_n$ in $\calM$, then
 $\calM \models \Land_{j=1}^n ( \mu_j \impl R_j )$.
\end{enumerate}
\end{lemma}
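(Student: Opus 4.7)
The plan is to prove item \ref{lem.HornFP.minsol} first and then use it to derive item \ref{lem.HornFP.sol}. The key observation is that the base and induction clauses collectively express that the tuple $(X_1,\ldots,X_n)$ is a prefixed point of the operator $F_{\Phi_\psi}$, while the end clauses act as additional constraints outside the fixed-point construction.

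For part \ref{lem.HornFP.minsol}, fix a structure $\calM$ with $\calM \models \psi\unsubst{\mybar{X}}{\mybar{R}}$. I would unfold the definition of $\varphi_j$ and argue that $F_j(\mybar{R}) \subseteq R_j$ for every $j$. Concretely, if $\mybar{a} \in F_j(\mybar{R})$, then some disjunct is satisfied: either a base clause $(\varphi,\mybar{s}) \in B_j$ with $\varphi$ true and $\mybar{a} = \mybar{s}$, or an induction clause $(\varphi,\iota,\tau,\mybar{s}) \in I_j$ with $\varphi$, $R_{i_1}(\mybar{t_1}),\ldots,R_{i_m}(\mybar{t_m})$ all true and $\mybar{a} = \mybar{s}$. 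Either way, the corresponding clause of $\psi$, satisfied by $\mybar{R}$, forces $R_j(\mybar{a})$. Thus $\mybar{R}$ is a prefixed point of the monotone operator $F_{\Phi_\psi}$, and the Knaster--Tarski characterisation of the least fixed point yields $\mu_j^\calM \subseteq R_j$ for every $j$, i.e.\ $\calM \models \Land_{j=1}^n (\mu_j \impl R_j)$.

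For part \ref{lem.HornFP.sol}, the direction $\psi\unsubst{\mybar{X}}{\mybar{\mu}} \impl \exists \mybar{X}\,\psi$ is immediate: in any $\calM$, the interpretations of the fixed-point predicates $\mu_j$ serve as witnesses for the second-order existentials. For the converse, assume $\calM \models \psi\unsubst{\mybar{X}}{\mybar{R}}$ for some relations $\mybar{R}$, and check $\calM \models C\unsubst{\mybar{X}}{\mybar{\mu}}$ for each clause $C$ of $\psi$. Base and induction clauses are handled uniformly by the fact that $\mybar{\mu}$ is a fixed point: if the body of a $B_j$- or $I_j$-clause holds for $\mybar{\mu}$ at some instance, then the matching disjunct of $\varphi_j$ holds, hence $F_j(\mybar{\mu}) \ni \mybar{s}$, and since $\mu_j = F_j(\mybar{\mu})$, we get $\mu_j(\mybar{s})$.

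The only slightly delicate case are the end clauses, and this is where part \ref{lem.HornFP.minsol} is needed. Suppose for contradiction that for some end clause $\varphi \land X_{i_1}(\mybar{t_1}) \land \cdots \land X_{i_m}(\mybar{t_m}) \impl \bot$ and some instantiation of its free variables, the body holds with $\mybar{X}$ interpreted by $\mybar{\mu}$. Applying part \ref{lem.HornFP.minsol}, $\mu_j^\calM \subseteq R_j$ for every $j$, so the body also holds with $\mybar{X}$ interpreted by $\mybar{R}$, contradicting $\calM \models \psi\unsubst{\mybar{X}}{\mybar{R}}$. Hence no such instance exists and the end clause is satisfied by $\mybar{\mu}$ as well. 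The main technical point, and the only nonroutine step, is precisely this use of minimality to discharge the end clauses; everything else is bookkeeping about how the disjunctive/existential structure of $\varphi_j$ encodes the one-step consequence operator on Horn clauses.
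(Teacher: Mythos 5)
Your proposal is correct and follows essentially the same route as the paper: show that any witnessing tuple $\mybar{R}$ is a prefixed point of $F_{\Phi_\psi}$ so that leastness gives $\mu_j \subseteq R_j$, verify the (B)- and (I)-clauses for $\mybar{\mu}$ directly from the fixed-point equation, and discharge the (E)-clauses via $\mu_j \subseteq R_j$ and the negative polarity of the predicate variables there. The only difference is presentational (you establish item~\ref{lem.HornFP.minsol} first and cite it, whereas the paper embeds the same argument inside the proof of item~\ref{lem.HornFP.sol} and notes that item~\ref{lem.HornFP.minsol} follows analogously).
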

\begin{proof}
The right-to-left direction of~\ref{lem.HornFP.sol} is clear.
For the left-to-right direction we first observe that the formulas $\mu_1,\ldots,\mu_n$ satisfy all
 clauses in (B) and (I), i.e., for all $j\in \{1,\ldots,n\}$ we have
\begin{align*}
\models & \forall \mybar{y}\, (\varphi \impl \mu_j(\mybar{s})), &\forall (\varphi,\mybar{s}) \in B_j,\\
\models & \forall \mybar{y}\, (\varphi \land \mu_{i_1}(\mybar{t_1}) \land \cdots \land \mu_{i_m}(\mybar{t_m})
 \impl \mu_j(\mybar{s})), &\forall (\varphi,\iota,\tau,\mybar{s}) \in I_j.
\end{align*}
To see this let $\calM$ be a structure and $\mybar{a}$ s.t.\ 
$\calM,[\mybar{y} := \mybar{a}] \models \varphi \land \mu_{i_1}(\mybar{t_1}) \land \cdots \land \mu_{i_m}(\mybar{t_m})$,
then, as $(\mu_1,\ldots,\mu_n)$ is a fixed point of $F_{\Phi}$, we have 
$\calM,[\mybar{y} := \mybar{a}] \models \mu_j(\mybar{s})$.
The argumentation is analogous for clauses of the form (B).

Now let $\calM$ be a structure s.t.\ $\calM \models \exists \mybar{X}\,\psi$.
Let $R_1,\ldots,R_n$ be relations in $\calM$ s.t.\ $\calM \models \psi\unsubst{\mybar{X}}{\mybar{R}}$.
Then for all $j \in \{1,\ldots,n\}$:
\begin{align}
\calM \models & \forall \mybar{y}\, (\varphi \impl R_j(\mybar{s})),
 & \forall (\varphi,\mybar{s}) \in B_j,\\
\calM \models & \forall \mybar{y}\, (\varphi \land R_{i_1}(\mybar{t_1}) \land \cdots
 \land R_{i_m}(\mybar{t_m}) \impl R_j(\mybar{s})),
 & \forall (\varphi,\iota,\tau,\mybar{s}) \in I_j. \label{HornFixedPointProofInductionClause}
\end{align}
Assume $\mybar{a_j} \in F_{\Phi}(R_1, \ldots ,R_n)_j$.
Then there either exists $(\varphi,\mybar{s}) \in B_j$ s.t.\ 
 $\calM \models \exists \mybar{y}\, (\varphi \land \mybar{a_j} = \mybar{s})$
or there exists $(\varphi,\iota,\tau,\mybar{s}) \in I_j$ s.t.
$\calM,[\mybar{X}:= \mybar{R}] \models \exists \mybar{y}\, (\varphi \land \Land_{k = 1}^{m}
 X_{i_k}(\mybar{t_k}) \land \mybar{a_j} = \mybar{s})$.
We assume the latter, the proof for the former is analogous.
Thus let $\mybar{a}$ be s.t.
\[
\calM,[\mybar{X}:= \mybar{R}, \mybar{y} := \mybar{a}] \models \varphi \land \Land_{k = 1}^{m}
 X_{i_k}(\mybar{t_k}) \land \mybar{x_j} = \mybar{s}.
\]
From (\ref{HornFixedPointProofInductionClause}) we obtain $\calM,[\mybar{y} := \mybar{a}] \models
 R_j(\mybar{s})$ and thus $\calM \models R_j(\mybar{a_j})$.

Hence $F_{\Phi}(R_1, \ldots, R_n) \subseteq (R_1, \ldots ,R_n)$ and as $(\mu_1, \ldots, \mu_n)$
 is the least fixed point of $F_{\Phi}$ we obtain $\calM \models \Land_{j=1}^n ( \mu_j \impl R_j )$.

For all clauses in (E) we have
\[
\calM,[\mybar{X} := \mybar{R}] \models \forall \mybar{y}\,(\varphi \land X_{i_1}(\mybar{t_1}) \land \cdots
 \land X_{i_m}(\mybar{t_m}) \impl \bot),
\]
and therefore, as $X_1,\ldots,X_n$ occur only negatively in this formula, we obtain
\[
\calM,[\mybar{X} := \mybar{\mu}] \models \forall \mybar{y}\, (\varphi \land X_{i_1}(\mybar{t_1})
 \land \cdots \land X_{i_m}(\mybar{t_m}) \impl \bot).
\]
Thus $\calM,[\mybar{X} := \mybar{\mu}]$ satisfies all clauses in $\psi$ and we conclude that
 $\calM \models \psi \unsubst{\mybar{X}}{\mybar{\mu}}$.
For~\ref{lem.HornFP.minsol}.\ we get $\calM \models \Land_{j=1}^n ( \mu_j \impl R_j )$ analogously
 to the proof of~\ref{lem.HornFP.sol}.
\end{proof}

\begin{theorem}[Fixed-Point theorem]\label{thm.HornFP}
Let $\exists X_1 \cdots \exists X_n\, \psi$ be a valid Horn formula equation
 and let $\mu_j = [\lfp_{X_j} ~\Phi_{\psi}]$ for $j \in \{1,\ldots,n\}$. Then:
\begin{enumerate}
\item $\models \psi\unsubst{\mybar{X}}{\mybar{\mu}}$,
\item If $\models \psi \unsubst{\mybar{X}}{\mybar{\chi}}$ for FO[LFP]-formulas
 $\chi_1,\ldots,\chi_n$, then $\models  \bigwedge_{j=1}^n ( \mu_j \rightarrow \chi_j )$.
\end{enumerate}
\end{theorem}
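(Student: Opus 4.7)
The plan is to derive both parts of the theorem directly from Lemma~\ref{lem.HornFP}, which has already established the essential model-theoretic content at the level of arbitrary relations. The theorem statement merely lifts that content from a single structure to validity, and from relations to FO[LFP]-definable predicates.

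For part~1, I would fix an arbitrary structure $\calM$. Since $\exists \mybar{X}\, \psi$ is valid by hypothesis, $\calM \models \exists \mybar{X}\, \psi$. Applying Lemma~\ref{lem.HornFP}.\ref{lem.HornFP.sol}, which gives the logical equivalence $\exists \mybar{X}\, \psi \liff \psi\unsubst{\mybar{X}}{\mybar{\mu}}$, yields $\calM \models \psi\unsubst{\mybar{X}}{\mybar{\mu}}$. As $\calM$ was arbitrary, we conclude $\models \psi\unsubst{\mybar{X}}{\mybar{\mu}}$.

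For part~2, let $\chi_1,\ldots,\chi_n$ be FO[LFP]-formulas with $\models \psi\unsubst{\mybar{X}}{\mybar{\chi}}$. For any structure $\calM$, each $\chi_j$ (with its free individual variables viewed as arguments) defines a relation $R_j$ on $M$ of the appropriate arity. Because substitution by a formula and interpretation via the induced relation yield the same truth value in $\calM$, we have $\calM \models \psi\unsubst{\mybar{X}}{\mybar{R}}$. Lemma~\ref{lem.HornFP}.\ref{lem.HornFP.minsol} then gives $\calM \models \Land_{j=1}^n (\mu_j \impl R_j)$, which, unfolding the definition of $R_j$, is precisely $\calM \models \Land_{j=1}^n (\mu_j \impl \chi_j)$. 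Universality over $\calM$ yields the desired validity.

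There is essentially no hard step left once Lemma~\ref{lem.HornFP} is in hand; the only point demanding a brief justification is the identification, in part~2, of an FO[LFP]-formula $\chi_j$ with the relation it defines in a given structure, so that the lemma (stated for relations) can be invoked. If I wanted to be maximally careful I would also note that the $\mu_j$ are themselves FO[LFP]-formulas by construction, so that the conclusion $\models \Land_{j=1}^n(\mu_j \impl \chi_j)$ is a well-formed FO[LFP] validity statement.
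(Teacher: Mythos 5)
Your proposal is correct and matches the paper's proof, which simply states that the theorem follows immediately from Lemma~\ref{lem.HornFP}; you have spelled out the two routine steps (instantiating validity at an arbitrary structure for part~1, and passing between an FO[LFP]-formula and the relation it defines in each structure for part~2) that the paper leaves implicit.
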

\begin{proof}
Follows immediately from Lemma~\ref{lem.HornFP}.
\end{proof}

We now turn to dual and linear Horn formula equations.
A dual constrained Horn clause is an $L$-formula of the form
 $\varphi \land X(\mybar{t}) \impl Y_1(\mybar{s_1}) \lor \cdots \lor Y_n(\mybar{s_n})$ or
 $\varphi \impl  Y_1(\mybar{s_1}) \lor \cdots \lor Y_n(\mybar{s_n})$ where $\mybar{t},\mybar{s_1},
 \ldots,\mybar{s_n}$ are tuples of first-order terms of appropriate arity and $\varphi$ is a
 first-order formula, i.e., a formula not containing predicate variables.
A {\em dual Horn formula equation} is a formula equation of the form
 $\exists \mybar{X} \forall \mybar{y}\, \bigwedge_{i=1}^m H_i $, where $H_i$ is a constrained
 dual Horn clause for $i \in \{1,\ldots,m\}$. 
A {\em linear Horn formula equation} is a formula equation that is both Horn and dual Horn.

For a formula $\psi$ we define $\psi^D$ as $\psi\sop \sel{X_1}{\neg X_1}, \ldots, \sel{X_n}{\neg X_n} \scl$
 where $X_1,\ldots,X_n$ are all predicate variables occurring in $\psi$.
Note that $\models \psi \liff (\psi^D)^D$ for all formulas $\psi$.
Moreover, note that $\models \exists \mybar{X}\, \psi \liff \exists \mybar{X}\, \psi^D$ where $\mybar{X}=X_1,\ldots,X_n$
 are all predicate variables occurring in $\psi$.
If $\exists \mybar{X}\, \psi$ is a Horn formula equation, then $\exists \mybar{X}\, \psi^D$ is logically
 equivalent to a dual Horn formula equation and if $\exists \mybar{X}\, \varphi$ is a dual Horn formula
 equation, then $\exists \mybar{X}\, \varphi^D$ is logicall equivalent to a Horn formula equation.
Note that dualisation of a (dual) Horn formula equation interchanges (B)- and (E)-clauses.
\begin{example}
Consider the constrained Horn clauses
\[
\psi \quad\equiv\quad X(a) \land ( X(u)\land X(v) \impl Y(f(u,v)) ) \land (Y(w) \impl \bot).
\]
The dualisation of $\psi$ is
\[
\psi^D \quad\equiv\quad \neg X(a) \land (\neg X(u) \land \neg X(v) \impl \neg Y(f(u,v))) \land (\neg Y(w)\impl \bot)
\]
which is logically equivalent to the dual constrained Horn clauses
\[
(X(a) \impl \bot) \land (Y(f(u,v)) \impl X(u) \lor X(v)) \land Y(w).
\]
\end{example}
We can now prove the following result which is dual to Lemma~\ref{lem.HornFP}.
\begin{lemma}\label{lem.dualHornFP}
Let $\exists X_1 \cdots \exists X_n\, \psi$ be a dual Horn formula equation and
 $\nu_j := \neg [\lfp_{X_j} ~\Phi_{\psi^D}]$ for $j \in \{1,\ldots,n\}$, then
\begin{enumerate}
\item\label{lem.dualHornFP.sol} $\models \exists \mybar{X}\, \psi \leftrightarrow \psi \unsubst{\mybar{X}}{\mybar{\nu}}$
 and
\item\label{lem.dualHornFP.maxsol} if $\calM \models \psi \unsubst{\mybar{X}}{\mybar{R}}$,
 for a structure $\calM$ and relations $R_1, \ldots, R_n$ in $\calM$, then
 $\calM \models \Land_{j=1}^n ( R_j \rightarrow \nu_j )$.
\end{enumerate}
\end{lemma}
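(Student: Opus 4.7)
The plan is to reduce everything to Lemma~\ref{lem.HornFP} via the dualisation operator $(\cdot)^D$, exploiting the two equivalences already recorded in the paper: $\models \psi \liff (\psi^D)^D$ and $\models \exists \mybar{X}\,\psi \liff \exists \mybar{X}\,\psi^D$. Since $\exists \mybar{X}\,\psi$ is dual Horn, $\exists \mybar{X}\,\psi^D$ is (equivalent to) a Horn formula equation, so Lemma~\ref{lem.HornFP} applies and yields $\mu_j := [\lfp_{X_j}\,\Phi_{\psi^D}]$ with the expected solution/minimality properties.

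The first step after invoking the lemma is a bookkeeping observation: for any formulas (or relations) $\eta_1,\ldots,\eta_n$ one has $\psi^D\unsubst{\mybar{X}}{\mybar{\eta}} \equiv \psi\unsubst{\mybar{X}}{\neg\mybar{\eta}}$, simply because $\psi^D$ is obtained from $\psi$ by substituting $\neg X_j$ for $X_j$. Specialising to $\mybar{\eta} = \mybar{\mu}$ gives $\psi^D\unsubst{\mybar{X}}{\mybar{\mu}} \equiv \psi\unsubst{\mybar{X}}{\mybar{\nu}}$. Then part~\ref{lem.dualHornFP.sol} is a three-step chain: $\models \exists \mybar{X}\,\psi \liff \exists \mybar{X}\,\psi^D$ by the duality identity; $\models \exists \mybar{X}\,\psi^D \liff \psi^D\unsubst{\mybar{X}}{\mybar{\mu}}$ by Lemma~\ref{lem.HornFP}.\ref{lem.HornFP.sol}; and finally $\psi^D\unsubst{\mybar{X}}{\mybar{\mu}} \equiv \psi\unsubst{\mybar{X}}{\mybar{\nu}}$ by the observation just made.

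For part~\ref{lem.dualHornFP.maxsol} I would pass to complements. Suppose $\calM \models \psi\unsubst{\mybar{X}}{\mybar{R}}$, and let $R_j' := M^{k_j}\setminus R_j$. Since $\psi\unsubst{\mybar{X}}{\mybar{R}}$ and $\psi^D\unsubst{\mybar{X}}{\mybar{R'}}$ are literally the same formula (the $\neg X_j$ in $\psi^D$ absorbs the complementation), we get $\calM \models \psi^D\unsubst{\mybar{X}}{\mybar{R'}}$. Lemma~\ref{lem.HornFP}.\ref{lem.HornFP.minsol} then gives $\calM \models \Land_j (\mu_j \impl R_j')$, i.e.\ $\mu_j\cap R_j = \emptyset$, i.e.\ $R_j \subseteq \neg \mu_j = \nu_j$, which is exactly $\calM \models \Land_j (R_j \impl \nu_j)$.

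The only real pitfall I anticipate is keeping the substitution semantics straight — specifically the identification $\psi^D\unsubst{\mybar{X}}{\mybar{R'}} = \psi\unsubst{\mybar{X}}{\mybar{R}}$ at the level of truth in $\calM$, since one has to be comfortable interpreting a predicate variable both by a formula and by a relation, and to read the double negation $\neg \neg R_j = R_j$ through the substitution. Once that identification is made explicit, both parts reduce mechanically to the Horn case, so no new fixed-point argument is needed and the proof is essentially a translation under the duality $(\cdot)^D$.
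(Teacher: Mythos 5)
Your proof is correct and follows essentially the same route as the paper's: both parts are reduced to Lemma~\ref{lem.HornFP} via the dualisation $(\cdot)^D$, using the syntactic identification $\psi^D\unsubst{\mybar{X}}{\mybar{\mu}} \equiv \psi\unsubst{\mybar{X}}{\mybar{\nu}}$ for part~\ref{lem.dualHornFP.sol} and the passage to complemented relations $R_j\compl$ followed by contraposition for part~\ref{lem.dualHornFP.maxsol}. No substantive difference from the paper's argument.
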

\begin{proof}
Let $\mu_j := [\lfp_{X_j} ~\Phi_{\psi^D}]$ for $j = 1,\ldots,n$.
For~\ref{lem.dualHornFP.sol}.\ note that, since $\exists \mybar{X}\, \psi$ is a dual Horn formula
 equation, $\exists \mybar{X}\, \psi^D$ is logically equivalent to a Horn formula equation. 
An application of Lemma~\ref{lem.HornFP}/\ref{lem.HornFP.sol}.\ yields
 $\models \exists \mybar{X}\, \psi^D \liff \psi^D \unsubst{\mybar{X}}{\mybar{\mu}}$.
Since $\psi^D\unsubst{\mybar{X}}{\mybar{\mu}}$ is syntactically equal to $\psi\unsubst{\mybar{X}}{\mybar{\nu}}$
 we obtain $\models \exists \mybar{X}\, \psi \liff \exists \mybar{X}\psi^D \liff \psi^D\unsubst{\mybar{X}}{\mybar{\mu}} \liff \psi\unsubst{\mybar{X}}{\mybar{\nu}}$.
 
For~\ref{lem.dualHornFP.maxsol}.\ assume that $\calM \models \psi\unsubst{\mybar{X}}{\mybar{R}}$ for a structure $\calM$
 and relations $R_1,\ldots,R_n$ in $\calM$.
Then $\calM \models \psi^D\sop \sel{X_1}{R_1\compl},\ldots,\sel{X_n}{R_n\compl} \scl$, so, 
by Lemma~\ref{lem.HornFP}/\ref{lem.HornFP.minsol}., $\calM \models \Land_{j=1}^n ( \mu_j \impl R_j\compl)$
 which yields $\calM \models \Land_{j=1}^n ( R_j \rightarrow \nu_j )$ by contraposition.
\end{proof}
\begin{theorem}[Dual Horn fixed-point theorem]
Let $\exists X_1 \cdots \exists X_n \psi$ be a valid dual Horn formula equation
 and let $\nu_j = \neg [\lfp_{X_j} ~\Phi_{\psi^D}]$ for $j = 1, \ldots ,n$,
 then
\begin{enumerate}
\item $\models \psi\unsubst{\mybar{X}}{\mybar{\nu}}$ and
\item if $\models \psi\unsubst{\mybar{X}}{\mybar{\chi}}$ for FO[LFP]-formulas
 $\chi_1, \ldots ,\chi_n$, then $\models \Land_{j=1}^n ( \chi_j \rightarrow \nu_j )$.
\end{enumerate}
\end{theorem}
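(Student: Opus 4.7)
The plan is to obtain this theorem as an immediate corollary of Lemma~\ref{lem.dualHornFP}, exactly mirroring how Theorem~\ref{thm.HornFP} is deduced from Lemma~\ref{lem.HornFP}. Both clauses of the statement are specialisations of the lemma to the case where the candidate solutions are given by FO[LFP]-formulas rather than by abstract relations in a structure.

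For part 1, I would argue as follows. Since $\exists\mybar{X}\,\psi$ is valid, every structure $\calM$ satisfies it. By Lemma~\ref{lem.dualHornFP}/\ref{lem.dualHornFP.sol}.\ we have the global equivalence $\models \exists\mybar{X}\,\psi \liff \psi\unsubst{\mybar{X}}{\mybar{\nu}}$, so validity transfers to $\psi\unsubst{\mybar{X}}{\mybar{\nu}}$.

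For part 2, suppose $\chi_1,\ldots,\chi_n$ are FO[LFP]-formulas with $\models \psi\unsubst{\mybar{X}}{\mybar{\chi}}$. Fix an arbitrary structure $\calM$; each $\chi_j$ defines a relation $R_j$ on the domain of $\calM$ (using the standard Tarski-style semantics of FO[LFP]). Then $\calM \models \psi\unsubst{\mybar{X}}{\mybar{R}}$, and Lemma~\ref{lem.dualHornFP}/\ref{lem.dualHornFP.maxsol}.\ yields $\calM \models \Land_{j=1}^n (R_j \impl \nu_j)$, which is literally $\calM \models \Land_{j=1}^n (\chi_j \impl \nu_j)$. Since $\calM$ was arbitrary, $\models \Land_{j=1}^n (\chi_j \impl \nu_j)$ follows.

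There is no substantive obstacle here; the only thing to keep in mind is the passage between an FO[LFP]-formula and the relation it denotes in a given structure, which is exactly the \emph{slight abuse of notation} the paper has already introduced after the definition of $\models$. Once this identification is made, the proof is a one-liner that invokes Lemma~\ref{lem.dualHornFP}, parallel to the proof of Theorem~\ref{thm.HornFP}.
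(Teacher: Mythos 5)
Your proposal is correct and matches the paper's approach: the paper's proof is simply ``Follows immediately from Lemma~\ref{lem.dualHornFP}'', and your argument spells out exactly that deduction (validity plus part~1 of the lemma for the first claim, and passing from FO[LFP]-formulas to the relations they define plus part~2 of the lemma for the second claim).
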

\begin{proof}
Follows immediately from Lemma~\ref{lem.dualHornFP}.
\end{proof}

Note that the operator induced by $\Phi_{\psi^D}$ is not the dual operator of the one induced by $\Phi_\psi$
 in the sense of~\cite{Fritz01Fixed} because $\Phi_{\psi^D}$ is not the (pointwise) negation of
 $\Phi_\psi$.
Therefore $\nu$ is not the greatest fixed point of $\Phi_\psi$.
The question whether $\nu$ permits a sensible definition as a greatest fixed point is left as future work
 by this paper.
For the case of linear Horn formula equations we obtain:
\begin{theorem}[Linear Horn fixed-point theorem]\label{thm.linHorn}
Let $\exists X_1 \cdots \exists X_n\, \psi$ be a valid linear Horn formula equation,
 let $\mu_j = [\lfp_{X_j} ~\Phi_{\psi}]$ and $\nu_j = \neg [\lfp_{Y_j} ~\Phi_{\psi^D}]$ for
 $j = 1,\ldots,n$,
 then
\begin{enumerate}
\item $\models \psi \unsubst{\mybar{X}}{\mybar{\mu}}$ and
 $\models \psi \unsubst{\mybar{X}}{\mybar{\nu}}$ and
\item\label{thm.linHorn.interpol} if $\models \psi\unsubst{\mybar{X}}{\mybar{\chi}}$ for FO[LFP]-formulas
 $\chi_1,\ldots,\chi_n$, then
 $\models \Land_{j=1}^n ( ( \mu_j \rightarrow \chi_j) \land ( \chi_j \rightarrow \nu_j ))$.
\end{enumerate}
\end{theorem}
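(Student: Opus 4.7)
The plan is to derive this theorem as a direct combination of the two preceding fixed-point results (Theorem~\ref{thm.HornFP} and the Dual Horn fixed-point theorem), exploiting the definitional fact that a linear Horn formula equation is simultaneously a Horn formula equation and a dual Horn formula equation. Consequently both theorems are applicable to $\exists \mybar{X}\, \psi$ without any modification, and the two quantities $\mu_j$ and $\nu_j$ are exactly the canonical solutions produced by those theorems.

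For item~1, I would simply apply Theorem~\ref{thm.HornFP} to the (valid) Horn formula equation $\exists \mybar{X}\, \psi$ to obtain $\models \psi\unsubst{\mybar{X}}{\mybar{\mu}}$, and apply the Dual Horn fixed-point theorem to the same (valid) dual Horn formula equation to obtain $\models \psi\unsubst{\mybar{X}}{\mybar{\nu}}$. No further argument is required beyond invoking both results in sequence.

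For item~2, suppose $\models \psi\unsubst{\mybar{X}}{\mybar{\chi}}$ for some FO[LFP]-formulas $\chi_1,\ldots,\chi_n$. The minimality clause of Theorem~\ref{thm.HornFP} yields $\models \Land_{j=1}^n (\mu_j \impl \chi_j)$, while the maximality clause of the Dual Horn fixed-point theorem yields $\models \Land_{j=1}^n (\chi_j \impl \nu_j)$. Conjoining these two entailments gives the desired sandwich $\models \Land_{j=1}^n ((\mu_j \impl \chi_j) \land (\chi_j \impl \nu_j))$.

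There is no real obstacle: the proof is pure bookkeeping, since the two underlying lemmas (Lemma~\ref{lem.HornFP} and Lemma~\ref{lem.dualHornFP}) are formulated in a way that makes the hypotheses of the two theorems coincide exactly on $\exists \mybar{X}\, \psi$ in the linear case. The only subtlety worth flagging is conceptual rather than technical, namely that the interpolation interpretation of item~\ref{thm.linHorn.interpol} depends on $\mu_j$ and $\nu_j$ being canonical \emph{least} and \emph{greatest-like} solutions respectively; the paper has already signalled (in the remark following the Dual Horn fixed-point theorem) that $\nu_j$ is not literally a greatest fixed point of $\Phi_\psi$, but the sandwich property nevertheless holds because Lemma~\ref{lem.dualHornFP}/\ref{lem.dualHornFP.maxsol} gives the required upper bound on every solution.
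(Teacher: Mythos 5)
Your proposal is correct and is exactly the argument the paper intends: the theorem is stated without an explicit proof precisely because, as you observe, a valid linear Horn formula equation satisfies the hypotheses of both Theorem~\ref{thm.HornFP} and the Dual Horn fixed-point theorem simultaneously, and conjoining their conclusions (minimality of $\mybar{\mu}$ and maximality of $\mybar{\nu}$) gives both items at once. Your closing remark about $\nu_j$ not being a literal greatest fixed point of $\Phi_\psi$ correctly echoes the paper's own caveat and does not affect the sandwich property.
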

Theorem~\ref{thm.linHorn}/\ref{thm.linHorn.interpol}.\ shows that solving a linear Horn formula
 equation is equivalent to solving an interpolation problem in FO[LFP] in the sense that, given
 two tuples of FO[LFP]-formulas we seek to find a tuple of first-order, i.e., {\em fixed-point free}, 
 formulas which is between them in the implication ordering.
At least from a theoretical point of view this result helps to explain the efficacy of interpolation-based
 methods for solving constrained Horn clauses, see, e.g.,~\cite{McMillan12Solving}.
The relationship between interpolation and Horn clauses has also been studied by
 encoding interpolation problems with a language condition on the constant symbols as Horn clause
 sets~\cite{Ruemmer13Classifying,Gupta14Generalised}.

\section{Applications to program verification}~\label{sec.apppv}

In this section we will describe some direct applications of our fixed-point theorem to
 the foundations of program verification.
As an exemplary framework we will consider the Hoare calculus for a simple imperative programming
 language as in~\cite{Winskel93Formal}.
We fix the first-order language of arithmetic $\calL = \{0,1,+,-,\cdot,\leq\}$.
The set of programs is defined by
\begin{align*}
p ::= \textbf{skip} ~|~ x := t ~|~ p_0;p_1 ~|~ \textbf{if}~ B~\textbf{then}~ p_0 ~\textbf{else} ~p_1 ~|~ \textbf{while}~ B ~\textbf{do}~ p_0,
\end{align*}
where $t$ is an $\calL$-term, $B$ a quantifier-free first-order formula in $\calL$
 and $x$ is a program variable.

The denotational semantics of programs is defined as usual based on a set of states $\Sigma$:
we write $C(p)$ for the partial function from $\Sigma$ to $\Sigma$ that is the denotational
 semantics of the program $p$.
A Hoare triple is written as $\hoare{\varphi}{p}{\psi}$.
For the purposes of this paper we fix the program variables to taking values in the integers and thus
 we can work in the standard model $\Int$.
We write $\sigma \models \hoare{\varphi}{p}{\psi}$ if $\Int,\sigma\models \varphi$ implies
 that $\Int,C(p)(\sigma) \models \psi$ and $\models \hoare{\varphi}{p}{\psi}$
 if $\sigma \models \hoare{\varphi}{p}{\psi}$ for all $\sigma \in \Sigma$.
The Hoare calculus can be defined as usual, see, e.g.~\cite{Winskel93Formal}.
We write $\proves \hoare{\varphi}{p}{\psi}$ if $\hoare{\varphi}{p}{\psi}$ is provable
 in the Hoare calculus.
We can then consider the verification condition of a Hoare triple as a Horn formula
 equation as follows:
\begin{definition}
The \emph{verification condition} of a Hoare triple $\hoare{\varphi}{p}{\psi}$,
 written $\vc(\hoare{\varphi}{p}{\psi})$, is a formula equation
 $\exists \mybar{I} \forall^*\, \vct(\hoare{\varphi}{p}{\psi})$,
 where $\vct(\hoare{\varphi}{p}{\psi})$ is defined by structural induction on $p$ as follows:
\begin{align*}
\vct(\hoare{\varphi}{\hskipit}{\psi}) &= \varphi \impl \psi\tag{1}\\
\vct(\hoare{\varphi}{x_j := t}{\psi}) &= \varphi \impl \psi\unsubst{x_j}{t}\tag{2}\\
\vct(\hoare{\varphi}{p_0;p_1}{\psi}) &= \vct(\hoare{\varphi}{p_0}{I}) \land \vct(\hoare{I}{p_1}{\psi})\tag{3}\\
\vct(\hoare{\varphi}{\hif B \hthen p_0 \helse p_1}{\psi}) &= \vct(\hoare{\varphi\land B}{p_0}{\psi}) \land \vct(\hoare{\varphi\land \neg B}{p_1}{\psi})\tag{4}\\
\vct(\hoare{\varphi}{\hwhile B \hdo p_0}{\psi}) &= \vct(\hoare{I \land B}{p_0}{I}) \land (\varphi \impl I) \land (I \land \neg B \rightarrow \psi)\tag{5}
\end{align*}
where, in clauses (3) and (5), $I$ is a fresh predicate variable which does not appear in
 $\varphi$ nor in  $\psi$.
Then $\vc(\hoare{\varphi}{p}{\psi}) = \exists \mybar{I} \forall^*\, \vct(\hoare{\varphi}{p}{\psi})$ is
 obtained from $\vct(\hoare{\varphi}{p}{\psi})$ by
 existential quantification of all predicate variables in $\forall^*\, \vct(\hoare{\varphi}{p}{\psi})$.
\end{definition}
Note that this is an purely syntactic definition, thus we can define $\vc(\hoare{\varphi}{p}{\psi})$
 analogously for a program $p$ and second-order formulas $\varphi, \psi$.
We then obtain the following completeness result which characterises Hoare provability by truth of a formula
 equation.
\begin{theorem}\label{thm.HoareSemantics}
Let $\hoare{\varphi}{p}{\psi}$ be a Hoare triple.
Then $\proves \hoare{\varphi}{p}{\psi}$ iff $\Int \models \hoare{\varphi}{p}{\psi}$ iff $\Int \models \vc(\hoare{\varphi}{p}{\psi})$.
\end{theorem}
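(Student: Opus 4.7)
The plan is to establish the theorem by a cycle of implications, writing (a), (b), (c) for the three propositions in the order they appear. For (a) $\Rightarrow$ (b) I would invoke the standard soundness of the Hoare calculus, proved by a routine induction on the derivation $\proves \hoare{\varphi}{p}{\psi}$ verifying that each Hoare rule preserves truth in $\Int$.

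For (b) $\Rightarrow$ (c) I would construct, by structural induction on $p$, set-theoretic interpretations of the existentially quantified predicate variables $\mybar{I}$ occurring in $\vc(\hoare{\varphi}{p}{\psi})$. Atomic programs introduce no predicate variable and the corresponding $\vct$ reduces to a first-order implication that holds in $\Int$ by (b). For sequential composition $p_0;p_1$ I would take $I$ to be the image under $C(p_0)$ of the set of $\varphi$-states, which makes the first conjunct of clause~(3) true by construction and the second true by (b) combined with the inductive hypothesis. The conditional reduces by a case split on $B$. For $\hwhile B \hdo p_0$ I would take $I$ to be the set of states reachable from a $\varphi$-state by finitely many iterations of $p_0$ along which $B$ holds, i.e., the set-theoretic least fixed point of the loop-body operator; the three conjuncts of clause~(5) then follow respectively by construction of $I$, by containment of the $\varphi$-states in $I$, and from (b).

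For (c) $\Rightarrow$ (a) I would invoke the fixed-point theorem of this paper. A direct inspection of the clause patterns produced by the $\vct$-recursion shows that $\forall^* \vct(\hoare{\varphi}{p}{\psi})$ is (equivalent to) a Horn formula equation in $\mybar{I}$, as every leaf implication contains at most one $I$-atom positively. Hence Lemma~\ref{lem.HornFP} applies and the canonical formulas $\mu_j := [\lfp_{I_j}\,\Phi]$ solve the verification condition in $\Int$, where $\Phi$ denotes the tuple of formulas associated to the matrix as in the definition of $\Phi_\psi$. I would then invoke the well-known fact that every FO[LFP]-definable relation over the standard model $\Int$ is already first-order arithmetic (via G\"odel coding of finite approximation stages), so each $\mu_j$ can be replaced by an equivalent $\calL$-formula $\chi_j$. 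These $\chi_j$ serve as the intermediate assertions in applications of the sequencing rule and as the loop invariants in applications of the while rule in a Hoare derivation built by structural recursion on $p$. The main obstacle lies precisely in this last step: bridging from second-order truth of $\vc$ in $\Int$ to a first-order Hoare derivation. The fixed-point theorem reduces this problem to extracting arithmetic representatives of the canonical FO[LFP] witnesses, which in turn relies on the arithmetic expressibility of least fixed points over the standard model.
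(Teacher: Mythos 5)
Your cycle of implications closes the theorem, but it is organised differently from the paper's proof. The paper takes (a) $\Leftrightarrow$ (b) as a black box (soundness \emph{and} relative completeness of the Hoare calculus, the latter resting on the expressivity of arithmetic over $\Int$), proves (a) $\Rightarrow$ (c) by induction on the Hoare \emph{derivation} (each rule application directly supplies the witnessing intermediate assertions and invariants as first-order formulas, hence as relations), and proves (c) $\Rightarrow$ (b) by translating the semantics. You instead prove (b) $\Rightarrow$ (c) purely semantically, by exhibiting set-theoretic witnesses (strongest postconditions for sequencing, the reachable-state set for loops) --- this is a clean and entirely legitimate replacement for the paper's (a) $\Rightarrow$ (c) leg, and it avoids any appeal to completeness on that side. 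Your (c) $\Rightarrow$ (a) leg, however, is where you take on the real burden: you are in effect re-proving Cook's relative completeness theorem by routing through Lemma~\ref{lem.HornFP} and then arithmetizing the canonical FO[LFP] witnesses, whereas the paper simply cites completeness. What your route buys is a self-contained argument consistent with the paper's fixed-point theme; what it costs is that you must actually justify the descent from FO[LFP] to first-order arithmetic.

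That descent is the one genuine gap. The ``well-known fact that every FO[LFP]-definable relation over the standard model $\Int$ is already first-order arithmetic'' is false as stated: positive first-order inductive definitions over arithmetic can have closure ordinal up to $\omega_1^{CK}$ and define $\Pi^1_1$-complete relations (e.g.\ well-foundedness of coded relations, via an induction in which the fixed-point variable occurs positively but universally quantified). So you cannot invoke this as a general principle. The instance you need is nevertheless true, but for a more specific reason: in $\Phi_\psi$ the fixed-point variables occur only existentially-positively (under $\exists$, $\land$, $\lor$), so the induced operator $F_{\Phi_\psi}$ is finitary, its closure ordinal is at most $\omega$, each derived tuple has a finite derivation tree, and membership in $\mu_j$ is therefore expressible by G\"odel-coding finite derivation trees in $\Int$. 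You need to make this finitariness argument explicit; ``finite approximation stages'' is the right intuition but only applies because of the Horn shape of the operator, not because of any general collapse of FO[LFP] over $\Int$. With that repair, and with the standard observation that the consequence rule of the Hoare calculus is taken relative to the first-order theory of $\Int$ (so that the implications $\varphi \impl \chi$ and $\chi \land \neg B \impl \psi$ may be discharged semantically), your structural recursion building the Hoare derivation goes through.
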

\begin{proof}[Proof Sketch]
$\proves \hoare{\varphi}{p}{\psi}$ iff $\Int \models \hoare{\varphi}{p}{\psi}$ is soundness and completeness
 of the Hoare calculus.
The implication from $\proves \hoare{\varphi}{p}{\psi}$ to $\Rightarrow \quad \Int \models \vc(\hoare{\varphi}{p}{\psi})$
 is proved by a straightforward induction on the structure of the Hoare proof of $\hoare{\varphi}{p}{\psi}$
The implication from $\Int \models \vc(\hoare{\varphi}{p}{\psi})$ to $\Int \models \hoare{\varphi}{p}{\psi}$ is proved
 by translating the semantics.
\end{proof}
As a corollary we can now obtain the statement that a partial correctness assertion of an imperative program
 is expressible as a formula in FO[LFP], a point also made in~\cite{Blass87Existential} for existential least
 fixed-point logic.
\begin{corollary}
Let $\hoare{\varphi}{p}{\psi}$ be a Hoare triple and let $\mybar{\mu}$ be the solution of
 $\vc(\hoare{\varphi}{p}{\psi})$ obtained from Theorem~\ref{thm.HornFP}, then
$\Int \models \vc(\hoare{\varphi}{p}{\psi})$ iff
 $\Int \models \vct(\hoare{\varphi}{p}{\psi})\unsubst{\mybar{I}}{\mybar{\mu}}$.
\end{corollary}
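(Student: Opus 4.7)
The plan is to derive this corollary as a direct application of Lemma~\ref{lem.HornFP} (the engine behind Theorem~\ref{thm.HornFP}), once it is verified that $\vc(\hoare{\varphi}{p}{\psi})$ is indeed a Horn formula equation. Lemma~\ref{lem.HornFP}/\ref{lem.HornFP.sol}.\ gives, for any Horn formula equation $\exists \mybar{X}\,\psi$, the validity $\models \exists \mybar{X}\,\psi \liff \psi\unsubst{\mybar{X}}{\mybar{\mu}}$, and hence the equivalence in every structure, in particular in $\Int$. Specialised to $\vc(\hoare{\varphi}{p}{\psi}) = \exists \mybar{I}\,\forall^*\,\vct(\hoare{\varphi}{p}{\psi})$ this yields exactly the biconditional claimed.

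The first step is therefore a structural induction on $p$ showing that $\vct(\hoare{\varphi}{p}{\psi})$ is a conjunction of Horn clauses (with free individual variables universally quantified as usual). The base cases for $\hskipit$ and $x_j := t$ contribute single implications whose succedents are either a first-order formula or a single predicate-variable atom inherited from the recursion. The sequencing and conditional cases just conjoin the inductive results. The critical case is the while-rule, which introduces the fresh predicate $I$ and three clauses: $\vct(\hoare{I \land B}{p_0}{I})$, $\varphi \impl I$, and $I \land \neg B \impl \psi$; the first is Horn by the induction hypothesis, the second is a (B)- or (I)-clause depending on whether $\varphi$ contains a predicate variable, and the third is a (I)-clause when $\psi$ is a predicate atom inherited from an outer context, or equivalent to the (E)-clause $I \land \neg B \land \neg \psi \impl \bot$ when $\psi$ is first-order. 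To carry the induction through, one maintains the invariant that the pre- and postconditions passed to recursive calls are either first-order formulas or single predicate-variable atoms.

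With this in place both directions are straightforward. For ($\Rightarrow$), assuming $\Int \models \vc(\hoare{\varphi}{p}{\psi})$, Lemma~\ref{lem.HornFP}/\ref{lem.HornFP.sol}.\ gives $\Int \models \forall^*\,\vct(\hoare{\varphi}{p}{\psi})\unsubst{\mybar{I}}{\mybar{\mu}}$, which is what is meant by $\Int \models \vct(\hoare{\varphi}{p}{\psi})\unsubst{\mybar{I}}{\mybar{\mu}}$. The converse direction is immediate: the FO[LFP]-formulas $\mybar{\mu}$ define concrete relations on $\Int$, and those relations witness the existential quantifiers in $\vc(\hoare{\varphi}{p}{\psi})$, so that $\Int \models \vct\unsubst{\mybar{I}}{\mybar{\mu}}$ implies $\Int \models \exists \mybar{I}\,\forall^*\,\vct$.

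The main work, and essentially the only point that requires care, is the structural-induction bookkeeping verifying the Horn shape of $\vc$: one has to keep track of the arities of the fresh predicates $I$ (the number of program variables in scope) and of the fact that the substitution $\psi\unsubst{x_j}{t}$ in the assignment rule still yields a Horn atom when $\psi$ is a predicate atom. Once this is established, the corollary is just an invocation of the lemma.
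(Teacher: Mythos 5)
Your proposal is correct and matches the paper's (implicit) argument: the paper treats this corollary as an immediate consequence of Lemma~\ref{lem.HornFP} together with the observation, stated just afterwards in the text, that $\vc(\hoare{\varphi}{p}{\psi})$ is a (linear) Horn formula equation, provable by a straightforward structural induction of exactly the kind you carry out. You also rightly route the argument through Lemma~\ref{lem.HornFP}/\ref{lem.HornFP.sol}.\ rather than Theorem~\ref{thm.HornFP} itself, since the theorem's validity hypothesis is not available here and only truth in $\Int$ is assumed.
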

A second corollary is based on the observation that
 $\vc(\hoare{\varphi}{p}{\psi})$ is a linear Horn formula equation (which can be shown
 by a straightforward induction).
Therefore we can apply Theorem~\ref{thm.linHorn} to the verification condition and obtain:
\begin{corollary}\label{cor.hoareipol}
Let $\hoare{\varphi}{p}{\psi}$ be a Hoare triple,
 then $\exists I_1 \cdots \exists I_n\, \pi \equiv \vc(\hoare{\varphi}{p}{\psi})$ 
 is a linear Horn formula equation.
Let $\pi^D$ be the dual formula of $\pi$ with predicate variables $K_1,\ldots,K_n$.
Assume $\Int \models \vc(\hoare{\varphi}{p}{\psi})$ and let
 $\mu_j = [\lfp_{I_j} ~\Phi_{\pi}]$ and $\nu_j = \neg [\lfp_{K_j} ~\Phi_{\pi^D}]$
 for $j = 1, \ldots, n$.
Then 
\begin{enumerate}
\item $\Int \models \vct(\hoare{\varphi}{p}{\psi})\unsubst{\mybar{I}}{\mybar{\mu}}$
 and $\Int \models \vct(\hoare{\varphi}{p}{\psi})\unsubst{\mybar{I}}{\mybar{\nu}}$.
\item If $\Int,[\mybar{I} := \mybar{R}] \models \vct(\hoare{\varphi}{p}{\psi})$
 for relations $R_1, \ldots ,R_n$, then $\Int \models \Land_{j=1}^n ( ( \mu_j \impl R_j)
  \land (R_j \impl \nu_j))$.
\end{enumerate}
\end{corollary}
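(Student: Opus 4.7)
The plan is to first verify that $\vc(\hoare{\varphi}{p}{\psi})$ is a linear Horn formula equation, and then to derive items 1 and 2 by applying the linear Horn fixed-point theorem pointwise in $\Int$ via Lemmas~\ref{lem.HornFP} and~\ref{lem.dualHornFP}.

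For the linearity claim I would proceed by structural induction on $p$, maintaining the strengthened hypothesis that for any formulas $\varphi$ and $\psi$ in which each predicate variable occurs at most once, $\vct(\hoare{\varphi}{p}{\psi})$ is a conjunction of clauses such that each clause has at most one positive and at most one negative occurrence of a predicate variable, predicate variables occurring in $\varphi$ occur only negatively, and predicate variables occurring in $\psi$ occur only positively. The original Hoare triple has first-order $\varphi$ and $\psi$, which trivially satisfies the uniqueness precondition, and all recursive calls only enlarge $\varphi$ and $\psi$ by conjunction with first-order formulas or by a single fresh atomic invariant, so these preconditions are preserved throughout the recursion.

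The base cases (1) and (2) produce a single clause $\varphi \impl \psi$ (or $\varphi \impl \psi\unsubst{x_j}{t}$) which trivially satisfies the hypothesis. The conditional case (4) is a union of two inductive conjunctions on disjoint fresh invariants. The sequencing case (3) introduces a fresh invariant $I$ which, by the inductive hypothesis, appears only positively in the clauses of $\vct(\hoare{\varphi}{p_0}{I})$ and only negatively in those of $\vct(\hoare{I}{p_1}{\psi})$, so single-clause linearity is preserved. The while case (5) is the most delicate: the fresh invariant $I$ features in all three conjuncts of $\vct(\hoare{I \land B}{p_0}{I}) \land (\varphi \impl I) \land (I \land \neg B \impl \psi)$. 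The base clause $\varphi \impl I$ contains a single positive occurrence of $I$, the end clause $I \land \neg B \impl \psi$ a single negative occurrence, and in the recursive conjunct $\vct(\hoare{I \land B}{p_0}{I})$ the inductive hypothesis provides, per clause, at most one negative occurrence of $I$ (from $I \land B$) and at most one positive occurrence (from the postcondition $I$); this is exactly the linear Horn condition.

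Given linearity, the hypothesis $\Int \models \vc(\hoare{\varphi}{p}{\psi})$ ensures that the existential second-order formula has a witness in $\Int$. Lemma~\ref{lem.HornFP} then yields $\Int \models \vct(\hoare{\varphi}{p}{\psi})\unsubst{\mybar{I}}{\mybar{\mu}}$ together with $\Int \models \Land_{j=1}^n(\mu_j \impl R_j)$ for any relation solution $R_1,\ldots,R_n$, while Lemma~\ref{lem.dualHornFP}, applied to the dualisation, yields $\Int \models \vct(\hoare{\varphi}{p}{\psi})\unsubst{\mybar{I}}{\mybar{\nu}}$ together with $\Int \models \Land_{j=1}^n(R_j \impl \nu_j)$. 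Combining these gives items 1 and 2. The main obstacle I anticipate is pinning down the strengthened inductive hypothesis precisely enough that the loop invariant's triple role in case (5) does not violate single-clause linearity; once that is stated correctly, the derivation of items 1 and 2 from the two lemmas is routine.
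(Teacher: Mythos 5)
Your proposal is correct and follows essentially the same route as the paper, which likewise disposes of the corollary by noting that $\vc(\hoare{\varphi}{p}{\psi})$ is a linear Horn formula equation ``by a straightforward induction'' and then invoking the linear Horn fixed-point machinery; your explicit strengthened induction hypothesis for case (5) just fills in the detail the paper leaves implicit. Your choice to apply Lemmas~\ref{lem.HornFP} and~\ref{lem.dualHornFP} directly rather than Theorem~\ref{thm.linHorn} is in fact the more careful reading, since the corollary only assumes truth in $\Int$ rather than validity, but it is the same underlying argument.
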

This corollary shows that finding first-order formulas as loop invariants is equivalent to
 an interpolation problem in the sense of finding a fixed-point free interpolant.
Just as Theorem~\ref{thm.linHorn} does for linear Horn clauses, this corollary contributes
 to explaining the efficacy of interpolation-based methods for loop invariant generation,
 see, e.g.~\cite{McMillan03Interpolation}.

As a third corollary we will show that the weakest precondition and the strongest postcondition
 are expressed by the least and greatest solutions $\mu$ and $\nu$ of linear Horn formula equations
 based on the verification condition.
\begin{definition}\label{hoareWeakestPreconditionDefinition}
Let $p$ be a program and $\varphi, \psi$ be first-order formulas in $\calL$.
The \emph{weakest precondition}\footnote{In the literature this is mostly called weakest liberal
 precondition and the term weakest precondition is reserved for the context of total correctness.
 As we only talk about partial correctness of programs there is no need for us to do so.}
of $p$ and $\psi$, written $\wpc(p,\psi)$, is defined as
\[
\wpc(p,\psi) = \{\sigma \in \Sigma ~|~ \Int,C(p)(\sigma) \models \psi \}.
\]
The \emph{strongest postcondition} of $p$ and $\varphi$, written $\spc(p,\varphi)$, is defined as
\[
\spc(p,\varphi) = \{\sigma \in \Sigma ~|~ \exists \sigma' \in \Sigma: \Int,\sigma' \models \varphi ~\text{and}~ C(p)(\sigma') = \sigma \}.
\]
\end{definition}
The following property of the weakest precondition and the strongest postcondition
 justifies the terminology and is of fundamental importance.
\begin{lemma}\label{lem.spwp}
Let $\hoare{\varphi}{p}{\psi}$ be a Hoare triple, then $\models \hoare{\varphi}{p}{\psi}$
 iff $[\varphi] \subseteq \wpc(p,\psi)$ iff $[\psi] \supseteq \spc(p,\varphi)$.
\end{lemma}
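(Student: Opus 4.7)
The plan is to establish both equivalences by a direct unfolding of definitions, treating carefully the fact that $C(p)$ is a partial function (so that $\Int, C(p)(\sigma) \models \psi$ is understood as holding vacuously whenever $C(p)(\sigma)$ is undefined, in accordance with the partial-correctness reading of $\models \hoare{\varphi}{p}{\psi}$).

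For the first equivalence, I would start from the definition $\models \hoare{\varphi}{p}{\psi}$, which unfolds to: for every $\sigma \in \Sigma$, if $\Int,\sigma \models \varphi$ (i.e.\ $\sigma \in [\varphi]$) and $C(p)(\sigma)$ is defined, then $\Int, C(p)(\sigma) \models \psi$. Comparing this with the definition of $\wpc(p,\psi)$ shows that the conclusion is precisely $\sigma \in \wpc(p,\psi)$. Hence the condition becomes $\forall \sigma\,(\sigma \in [\varphi] \Rightarrow \sigma \in \wpc(p,\psi))$, which is exactly $[\varphi] \subseteq \wpc(p,\psi)$.

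For the second equivalence, I would unfold $[\psi] \supseteq \spc(p,\varphi)$ as: for every $\sigma \in \Sigma$, if there exists $\sigma' \in \Sigma$ with $\Int,\sigma' \models \varphi$ and $C(p)(\sigma') = \sigma$, then $\Int,\sigma \models \psi$. A change of bound variable (reasoning about the witness $\sigma'$ rather than the output $\sigma$) rewrites this as: for every $\sigma' \in \Sigma$, if $\Int,\sigma' \models \varphi$ and $C(p)(\sigma')$ is defined, then $\Int, C(p)(\sigma') \models \psi$; and this is precisely $\models \hoare{\varphi}{p}{\psi}$ in the partial-correctness reading. Renaming $\sigma'$ back to $\sigma$ closes the equivalence.

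The content of the lemma is entirely a matter of rephrasing quantifier structure, so there is no real obstacle; the only point that needs care is being consistent about partial definedness of $C(p)$ on both sides of each biconditional. Once that convention is fixed, each of the two equivalences becomes a single line of first-order reasoning.
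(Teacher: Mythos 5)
Your proof is correct: the paper states this lemma without any proof, and your direct unfolding of the definitions is exactly the intended argument. The one point of substance you identify --- treating $\Int, C(p)(\sigma) \models \psi$ as vacuously true when $C(p)(\sigma)$ is undefined, uniformly in the definitions of $\models \hoare{\varphi}{p}{\psi}$ and of $\wpc(p,\psi)$ (the ``liberal'' reading flagged in the paper's footnote) --- is handled consistently, so nothing is missing.
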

For a formula $\varphi$ all of whose free variables are program variables we define 
 $[\varphi] = \{ \sigma\in\Sigma \mid \Int,\sigma\models \varphi\}$,
 the set of states defined by $\varphi$.
It is well-known that for any program $p$ and any formula $\psi$ there is a first-order formula
 $\varphi_\wpc$ which defines $\wpc(p,\psi)$, i.e., $[\varphi_\wpc] = \wpc(p,\psi)$ and, symmetrically,
 for any program $p$ and any formula $\varphi$ there is a first-order formula $\psi_\spc$ which
 defines $\spc(p,\varphi)$, i.e., $[\psi_\spc] = \spc(p,\varphi)$.
Note that these formulas rely on the expressivity of the assertion language, i.e, in our setting, 
 on an encoding of finite sequences in $\Int$.

We will consider the formula equation $\exists X_0\, \vc(\hoare{\varphi}{p}{X_0})$, which asks for
 a formula $X_0$ s.t.\ all states satisfying $\varphi$ satisfy $X_0$ after running the program $p$.
Symmetrically we will consider the formula equation $\exists X_0\, \vc(\hoare{X_0}{p}{\psi})$. 
Note that these are linear Horn formula equations and therefore we can apply Theorem~\ref{thm.linHorn}.
In general there also occur predicate variables in $\vc(\hoare{\varphi}{p}{\psi})$, yet here we will only
 be interested in the solution for the outermost predicate variable.
\begin{corollary}\label{cor.spc}
Let $p$ be a program, let $\varphi$ be a formula,
 let $\exists \mybar{X}\, \pi \equiv \exists X_0\, \vc(\hoare{\varphi}{p}{X_0})$,
 and let $\mu = [\lfp_{X_0}\, \Phi_{\pi}]$, then $[\mu] = \spc(p,\varphi)$.
\end{corollary}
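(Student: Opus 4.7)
The plan is to establish the two inclusions $\spc(p,\varphi) \subseteq [\mu]$ and $[\mu] \subseteq \spc(p,\varphi)$ by combining Lemma~\ref{lem.HornFP} with the semantic characterisation of Hoare triples via their verification conditions given by Theorem~\ref{thm.HoareSemantics} and Lemma~\ref{lem.spwp}. A preliminary step is to verify that $\exists \mybar{X}\, \pi \equiv \exists X_0\, \vc(\hoare{\varphi}{p}{X_0})$ is indeed a Horn formula equation: by structural induction on $p$ along clauses~(1)--(5) of the definition of $\vct$, the outermost predicate variable $X_0$ occurs only in head position, and the invariant predicates $\mybar{I}$ yield the same linear Horn shape already exploited in Corollary~\ref{cor.hoareipol}. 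Thus Lemma~\ref{lem.HornFP} applies and $\mu$ is the first component of the least-fixed-point solution $\mybar{\mu}$ of $\pi$.

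For $\spc(p,\varphi) \subseteq [\mu]$, Lemma~\ref{lem.HornFP}/\ref{lem.HornFP.sol}.\ gives $\models \pi\unsubst{\mybar{X}}{\mybar{\mu}}$, so in particular $\Int \models \vc(\hoare{\varphi}{p}{\mu})$. The implication from $\Int \models \vc(\hoare{\varphi}{p}{\mu})$ to $\Int \models \hoare{\varphi}{p}{\mu}$ in Theorem~\ref{thm.HoareSemantics} is a purely Tarski-semantic argument and therefore still applies when the postcondition is an FO[LFP]-formula, yielding $\Int \models \hoare{\varphi}{p}{\mu}$. Lemma~\ref{lem.spwp} then delivers the claimed inclusion.

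For the reverse inclusion I would invoke the expressivity of the assertion language to obtain a first-order formula $\psi_\spc$ with $[\psi_\spc] = \spc(p,\varphi)$. Lemma~\ref{lem.spwp} gives $\Int \models \hoare{\varphi}{p}{\psi_\spc}$, hence $\Int \models \vc(\hoare{\varphi}{p}{\psi_\spc})$ by Theorem~\ref{thm.HoareSemantics}. Unpacking the existential quantification over the invariant predicates provides relations $R_1,\ldots,R_n$ in $\Int$ such that $\Int \models \pi\unsubst{\mybar{X}}{\mybar{\chi}}$ with $\mybar{\chi} = (\psi_\spc, R_1,\ldots,R_n)$. Applying Lemma~\ref{lem.HornFP}/\ref{lem.HornFP.minsol}.\ with $\calM = \Int$ then yields $\Int \models \mu \impl \psi_\spc$, so $[\mu] \subseteq [\psi_\spc] = \spc(p,\varphi)$.

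The main obstacle is being careful about which half of the fixed-point lemma one invokes at each step: part~\ref{lem.HornFP.sol}.\ is about validity of a substitution of formulas and is used to produce $\mu$ as a legitimate postcondition, whereas the reverse inclusion crucially depends on the relational, single-structure formulation in part~\ref{lem.HornFP.minsol}.\ since we only have truth of $\vc(\hoare{\varphi}{p}{\psi_\spc})$ in $\Int$ rather than full validity, and the witnesses for $\mybar{I}$ come as relations on $\Int$ supplied by completeness of Hoare logic rather than as explicit formulas.
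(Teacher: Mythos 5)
Your proof follows the paper's argument essentially step for step: Lemma~\ref{lem.HornFP}/\ref{lem.HornFP.sol}.\ combined with Theorem~\ref{thm.HoareSemantics} and Lemma~\ref{lem.spwp} for $\spc(p,\varphi) \subseteq [\mu]$, and expressivity of the assertion language plus Lemma~\ref{lem.spwp}, Theorem~\ref{thm.HoareSemantics} and Lemma~\ref{lem.HornFP}/\ref{lem.HornFP.minsol}.\ for the converse inclusion. The one imprecision is your claim that Lemma~\ref{lem.HornFP}/\ref{lem.HornFP.sol}.\ ``gives $\models \pi\unsubst{\mybar{X}}{\mybar{\mu}}$'': that lemma only yields the equivalence $\models \exists \mybar{X}\, \pi \liff \pi\unsubst{\mybar{X}}{\mybar{\mu}}$, so you must first check that $\Int \models \exists X_0\, \vc(\hoare{\varphi}{p}{X_0})$ --- the paper does this in one line by observing $\Int \models \vc(\hoare{\varphi}{p}{\top})$ --- before you may conclude $\Int \models \vc(\hoare{\varphi}{p}{\mu})$.
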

\begin{proof}
Since $\Int \models \vc(\hoare{\varphi}{p}{\top})$ we have
 $\Int \models \exists X\, \vc(\hoare{\varphi}{p}{X_0})$.
From Lemma~\ref{lem.HornFP}/\ref{lem.HornFP.sol}.\ we obtain $\Int \models \vc(\hoare{\varphi}{p}{\mu})$
 which, by Theorem~\ref{thm.HoareSemantics}, is equivalent to $\models \hoare{\varphi}{p}{\mu}$.
By Lemma~\ref{lem.spwp} we obtain $[\mu] \supseteq \spc(p,\varphi)$.

For the other direction let $\psi_{\spc}$ be a first-order formula with $[\psi_{\spc}] = \spc(p,\varphi)$.
By Lemma~\ref{lem.spwp} we have $\models \hoare{\varphi}{p}{\psi_{\spc}}$, which is equivalent to
 $\Int \models \vc(\hoare{\varphi}{p}{\psi_{\spc}})$.
By Lemma~\ref{lem.HornFP}/\ref{lem.HornFP.minsol}.\ we have $\Int \models \forall \mybar{x}
 ( \mu(\mybar{x}) \impl \psi_{\spc}(\mybar{x}))$ and thus
 $[\mu] \subseteq [\psi_{\spc}] = \spc(p,\varphi)$.
\end{proof}

\begin{corollary}
Let $p$ be a program and let $\psi$ be a formula,
 let $\exists \mybar{X}\, \pi \equiv \exists X_0\, \vc(\hoare{X_0}{p}{\psi})$,
 and let $\nu = \neg [\lfp_{X_0}\,\Phi_{\pi^D}]$, then $[\nu] = \wpc(p,\psi)$.
\end{corollary}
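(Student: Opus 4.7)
The plan is to dualise the proof of Corollary~\ref{cor.spc}, exchanging the role of Lemma~\ref{lem.HornFP} with the dual statement Lemma~\ref{lem.dualHornFP}, i.e., exploiting the fact that $\nu$ is the greatest rather than least solution. As in the previous corollary, the verification condition $\vc(\hoare{X_0}{p}{\psi})$ is a linear Horn formula equation, so both Lemmas apply.

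First I would establish $[\nu] \subseteq \wpc(p,\psi)$. For this I observe that $\exists X_0\, \vc(\hoare{X_0}{p}{\psi})$ is satisfied in $\Int$: substituting $X_0$ (and every loop-invariant variable) by $\bot$ yields the trivially valid Hoare triple $\hoare{\bot}{p}{\psi}$, which by Theorem~\ref{thm.HoareSemantics} entails the corresponding satisfaction of the verification condition. Lemma~\ref{lem.dualHornFP}/\ref{lem.dualHornFP.sol}.\ then provides $\Int \models \vc(\hoare{\nu}{p}{\psi})$, where all predicate variables are instantiated by their greatest-solution formulas. Applying Theorem~\ref{thm.HoareSemantics} in the other direction gives $\models \hoare{\nu}{p}{\psi}$, so Lemma~\ref{lem.spwp} yields $[\nu] \subseteq \wpc(p,\psi)$.

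For the reverse inclusion I would invoke the expressibility statement recalled just before the corollary: there exists a first-order formula $\varphi_\wpc$ with $[\varphi_\wpc] = \wpc(p,\psi)$. By Lemma~\ref{lem.spwp}, $\models \hoare{\varphi_\wpc}{p}{\psi}$, and Theorem~\ref{thm.HoareSemantics} turns this into $\Int \models \vc(\hoare{\varphi_\wpc}{p}{\psi})$, so there exist witness relations in $\Int$ for the remaining loop-invariant variables. Lemma~\ref{lem.dualHornFP}/\ref{lem.dualHornFP.maxsol}.\ applied to these witnesses together with $R_0 := [\varphi_\wpc]$ gives $\Int \models \forall \mybar{x}\,(\varphi_\wpc(\mybar{x}) \impl \nu(\mybar{x}))$, hence $\wpc(p,\psi) = [\varphi_\wpc] \subseteq [\nu]$, completing the proof.

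Since this essentially mirrors the proof of Corollary~\ref{cor.spc}, I do not anticipate any substantial obstacle. The only mild care needed is in the bookkeeping of the loop-invariant predicate variables that accompany $X_0$ in $\vc$: Lemma~\ref{lem.dualHornFP} handles all of them simultaneously, but the conclusion one extracts is only the inclusion concerning $X_0$, which is exactly the set-theoretic statement $[\nu] = \wpc(p,\psi)$ that the corollary claims.
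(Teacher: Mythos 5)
Your proposal is correct and is essentially the paper's own proof: the paper simply states that the argument is symmetric to that of Corollary~\ref{cor.spc}, and you have written out exactly that dualisation, replacing Lemma~\ref{lem.HornFP} by Lemma~\ref{lem.dualHornFP}, $\top$ as trivial postcondition by $\bot$ as trivial precondition, and $\psi_\spc$ by $\varphi_\wpc$.
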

\begin{proof}
Symmetric to that of Corollary~\ref{cor.spc}
\end{proof}

Note that the formulas $\mu$ and $\nu$ thus obtained do not rely on an expressivity hypothesis anymore.
The encoding of sequences is replaced by the least fixed-point operator.

\section{Towards an abstract fixed-point theorem}\label{sec.afpthm}

Abstract interpretation, originally introduced in~\cite{Cousot77Abstract}, is one of the most important techniques
 in static analysis and software verification.
Since many verification techniques could successfully be generalised from programs to the logical level
 of constrained Horn clauses it is also natural to expect this possibility for abstract interpretation.
And indeed, abstract interpretations have been used in tools for solving Horn clauses~\cite{Hoder11muZ,Kafle16Rahft}.
In this section we briefly outline how we expect this generalisation to apply to our fixed-point theorem.
Proving the main statement of this section, Conjecture~\ref{conj.afpthm}, is currently work-in-progress.

An application we have in mind is the following: in the recent article~\cite{Hetzl20Decidability}
 the decidability of the existence of affine invariants for programs with affine assignments
 (essentially due to Karr~\cite{Karr76Affine}) has been generalised
 to formula equations of the form $\exists \mybar{X} \forall^* \varphi$ for $\varphi$ being a quantifier-free
 formula.
The essential difference between the proof in~\cite{Hetzl20Decidability} and 
 Theorem~\ref{thm.HornFP} is that in ~\cite{Hetzl20Decidability} the fixed point is formed in the lattice of affine subspaces and not in
 the power set lattice.
We can allow for this possibility as follows.
\begin{definition}
Let $\calA = (A,\subseteq)$ and $\calB = (B,\sqsubseteq)$ be two partially ordered sets.
A \emph{Galois connection} between $\calA$ and $\calB$ consists of two functions
$\alpha: A \rightarrow B$ and $\gamma: B \rightarrow A$, s.t.\ for all $X \in A$
and $Y \in B$:
\begin{align*}
X \subseteq \gamma(Y) \quad \Leftrightarrow \quad \alpha(X) \sqsubseteq Y.
\end{align*}
\end{definition}

\begin{definition}
Let $\calL$ be a language.
A \emph{model abstraction} is a pair $(\calM,G)$, where $\calM$ is an $\calL$-structure and
 $G = (\mathcal{V}_k,\alpha_k,\gamma_k)_{k \in \Nat}$ is a sequence of triples, s.t.\ for all
 $k \in \Nat:$ $\mathcal{V}_k =(V_k,\sqsubseteq)$ is a complete lattice and
 $\alpha_k: \powset(M^k) \rightarrow V_k$ and $\gamma_k: V_k \rightarrow \powset(M^k)$ form a Galois-connection
 between $(\powset(M^k),\subseteq)$ and $\mathcal{V}_k$.
\end{definition}

\begin{example}\label{ex.affone}
Consider the language $\Laff = (0,1,+,(c)_{c \in \Rat})$ where the intended interpretation of the
 unary function symbol $c$ for $c\in \Rat$ is multiplication with $c$.
Let $G_{\aff} = ((\Aff \Rat^k,\subseteq),\aff_k,\id_k)_{k \in \Nat}$, where $\Aff \Rat^k$ is the set of all affine
subspaces of $\Rat^k$, $\aff_k$ maps every subset of $\Rat^k$ to its affine hull and $\id_k$ is the
 embedding of $\Aff \Rat^k$ in $\powset(\Rat^k)$.
Then $(\Rat, G_{\aff})$ is a model abstraction. 
\end{example}

We will now introduce the semantics of model abstractions.
We will interpret FO[LFP]- and second-order-formulas $\varphi$ by defining a satisfaction relation
 $(\calM, G) \modelsa \varphi$.
The crucial difference between $\modelsa$ and standard Tarski semantics $\models$ will be that
 second-order quantifiers and the least fixed-point operator will not be interpreted in the power set
 of the domain but in $G$ instead (for the appropriate arity).
This restriction of the domain of second-order quantification is reminiscent of (but different from)
 Henkin semantics of second- and higher-order logic~\cite{Henkin50Completeness}.
\begin{definition}
The defining clauses for first-order atoms, propositional connectives, and first-order
 quantifiers for $\modelsa$ are identical to those for $\models$.
For formulas of the form $\exists X\, \psi$, where $X$ is a $k$-ary predicate variable, we define
\[
(\calM,G) \models_a \exists X \psi \quad \Leftrightarrow \quad \exists S \in V_k: (\calM,G) \models_a \psi[X\backslash \gamma_k(S)],
\]
and analogously for formulas of the form $\forall X \psi$.
The semantics of the $\lfp$-operator is defined as follows.
Let $X_1,\ldots,X_n$ be formula variables with $X_i$ having arity $k_i$, let
 $\Phi = (\varphi_i(X_1,\ldots,X_n,\mybar{u_i}))_{i=1}^n$ be a tuple of formulas
 s.t.\ $|\mybar{u_i}| = k_i$ and $X_1,\ldots,X_n$ occur only positively in $\varphi_i$
 for all $i\in\{1,\ldots,n\}$.
Let
\begin{align*}
F_i^{\#}: ~V_{k_1} \times \cdots \times V_{k_n} &\rightarrow V_{k_i}\\
(Y_1, \ldots ,Y_n) &\mapsto \alpha_{k_i} \circ F_i(\gamma_{k_1}(Y_1), \ldots ,\gamma_{k_n}(Y_n)),
\end{align*}
where $F_i$ is defined as in Section~\ref{sec.fpthm} and let $F_{\Phi}^{\#} = (F_1^{\#},\ldots,F_n^{\#})$.
Then
\[
(\calM,G) \modelsa [\lfp_{X_i} ~\Phi](\mybar{a}) \quad \Leftrightarrow \quad \mybar{a} \in \gamma_{k_i}( \lfp(F_{\Phi}^{\#})_i).
\]
\end{definition}
Note that the semantics of the least fixed-point operator is well-defined:
We already know that $F_i$ is a monotone operator and $\alpha_{k_i}$ and $\gamma_{k_i}$ are monotone
 as they form a Galois connection for $i \in \{1,\ldots,n\}$.
Thus $F_i^{\#}$ is monotone for all $i \in \{1,\ldots,n\}$ and therefore $F_{\Phi}^{\#}$ is monotone as
well. As $\mathcal{V}_k$ is a complete lattice for all $k \in \Nat$, we can use the Knaster-Tarski theorem
 to obtain the least fixed point of $F_{\Phi}^{\#}$.

\begin{conjecture}[Abstract fixed-point theorem]\label{conj.afpthm}
Let $\calL$ be a language and $(\calM,G)$ be a model abstraction.
Let $\exists X_1 \cdots \exists X_n\, \psi$ be a Horn
 formula equation and let $\mu_j = [\lfp_{X_j} ~\Phi_{\psi}]$ for $j=1,\ldots,n$.
Then:
\begin{enumerate}
\item $(\calM,G) \models_a \exists \mybar{X}\, \psi \leftrightarrow \psi\unsubst{\mybar{X}}{\mybar{\mu}}$ and
\item if $(\calM,G) \models_a \psi\unsubst{\mybar{X}}{\mybar{\chi}}$ for
 FO[LFP]-formulas $\chi_1,\ldots,\chi_n$, then $(\calM,G) \models_a \bigwedge_{j=1}^n \left( \alpha_j \rightarrow \chi_j \right)$.
\end{enumerate}
\end{conjecture}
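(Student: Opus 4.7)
The plan is to adapt the proof of Lemma~\ref{lem.HornFP} to the abstract setting by systematically translating each step through the Galois connections of $G$. Throughout I write $\mu^{\#} = \lfp(F_\Phi^{\#}) \in V_{k_1} \times \cdots \times V_{k_n}$, so that the interpretation of $\mu_j$ under $\modelsa$ is the concrete relation $\gamma_{k_j}(\mu^{\#}_j)$. The first step will be an adjunction identity: because $\mu^{\#}$ is a fixed point of $F_\Phi^{\#}$, we have $\mu^{\#}_j = \alpha_{k_j}(F_j(\gamma(\mu^{\#})))$, and the Galois property then yields $F_j(\gamma(\mu^{\#})) \subseteq \gamma_{k_j}(\mu^{\#}_j)$. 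In other words, the tuple $(\gamma_{k_1}(\mu^{\#}_1),\ldots,\gamma_{k_n}(\mu^{\#}_n))$ is a post-fixed point of the concrete operator $F_\Phi$ (in general strictly larger than $\lfp F_\Phi$), which, exactly as in Lemma~\ref{lem.HornFP}, is all that is required to verify that $\mybar{\mu}$ satisfies every base and induction clause of $\psi$ under $\modelsa$.

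For part~1, the direction $\psi\unsubst{\mybar{X}}{\mybar{\mu}} \impl \exists\mybar{X}\,\psi$ is immediate, since each $\gamma_{k_j}(\mu^{\#}_j)$ lies in $\gamma_{k_j}(V_{k_j})$ and thus constitutes a legal witness for the abstract $\exists X_j$. For the converse, suppose $(\calM,G) \modelsa \exists\mybar{X}\,\psi$ with witnesses $S_j \in V_{k_j}$, so that $(\calM,G) \modelsa \psi\sop \sel{X_1}{\gamma_{k_1}(S_1)}, \ldots, \sel{X_n}{\gamma_{k_n}(S_n)} \scl$. I would then prove $\mu^{\#} \sqsubseteq \mybar{S}$ componentwise via Knaster--Tarski: satisfaction of the (B)- and (I)-clauses by $\gamma(\mybar{S})$ yields $F_j(\gamma(\mybar{S})) \subseteq \gamma_{k_j}(S_j)$, and applying $\alpha_{k_j}$ gives $F_j^{\#}(\mybar{S}) \sqsubseteq S_j$, so $\mybar{S}$ is a pre-fixed point of $F_\Phi^{\#}$ and $\mu^{\#} \sqsubseteq \mybar{S}$ follows. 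Concretising yields $\gamma(\mu^{\#}) \subseteq \gamma(\mybar{S})$ componentwise; because the $X_j$ occur only negatively in the (E)-clauses, their satisfaction by $\gamma(\mybar{S})$ transfers downward to $\mybar{\mu}$, and combined with the (B)/(I) satisfaction from the previous step this gives $(\calM,G) \modelsa \psi\unsubst{\mybar{X}}{\mybar{\mu}}$.

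The main obstacle will be part~2. Given FO[LFP]-formulas $\chi_j$ with $(\calM,G) \modelsa \psi\unsubst{\mybar{X}}{\mybar{\chi}}$, the natural attempt is to set $C_j := \{\mybar{a} \mid (\calM,G) \modelsa \chi_j(\mybar{a})\}$ and $S_j := \alpha_{k_j}(C_j)$, and then repeat the pre-fixed-point argument of part~1 to obtain $\mu^{\#} \sqsubseteq \mybar{S}$ and hence $\gamma(\mu^{\#}) \subseteq \gamma_{k_j}(S_j)$. The catch is that $C_j$ need not lie in $\gamma_{k_j}(V_{k_j})$, so $\gamma_{k_j}(S_j)$ is typically a strict over-approximation of $C_j$; the clause assumption supplies only $F_j(\mybar{C}) \subseteq C_j$, whereas to establish $F_\Phi^{\#}(\mybar{S}) \sqsubseteq \mybar{S}$ one needs the analogous containment for $F_j(\gamma(\mybar{S}))$, which in general fails. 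Bridging this gap will presumably require either a compatibility hypothesis on the abstraction (for example exactness of $F_j^{\#}$ on formula-definable inputs) or a reformulation of the conclusion: the printed statement writes $\alpha_j \impl \chi_j$, which I read as a typographical slip for $\mu_j \impl \chi_j$, but it may in fact signal that the intended bound involves an $\alpha$-translation of the form $\gamma_{k_j}(\alpha_{k_j}(C_j)) \supseteq \gamma_{k_j}(\mu^{\#}_j)$. Pinning down the correct formulation is precisely the point at which, as the paper notes, the proof becomes work in progress.
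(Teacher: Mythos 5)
The paper offers no proof to compare against here: the statement is explicitly a conjecture, and the authors say that proving it is work in progress, so your attempt can only be judged on its own merits. On those merits, your treatment of part~1 is sound. The adjunction identity $F_j(\gamma(\mu^{\#})) \subseteq \gamma_{k_j}(\mu^{\#}_j)$ does follow from $\mu^{\#}$ being a fixed point of $F_\Phi^{\#}$ together with the Galois property, and this is exactly what is needed to verify the (B)- and (I)-clauses for $\mybar{\mu}$ under $\modelsa$; the left-to-right direction via $F_j^{\#}(\mybar{S}) \sqsubseteq S_j$, Knaster--Tarski, monotonicity of $\gamma$, and downward transfer of the (E)-clauses correctly mirrors the proof of Lemma~\ref{lem.HornFP}. (A terminological quibble: you call $\gamma(\mu^{\#})$ a post-fixed point and $\mybar{S}$ a pre-fixed point for the same shape of inequality $F(x) \le x$; pick one convention.)

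Your diagnosis of part~2 is also correct, and it can be sharpened into a counterexample showing that the statement, read with $\mu_j \impl \chi_j$ in place of the undefined $\alpha_j \impl \chi_j$, is false as it stands. Take each $V_k$ to be the two-element lattice with $\gamma_k(\bot) = \emptyset$, $\gamma_k(\top) = M^k$, $\alpha_k(\emptyset) = \bot$ and $\alpha_k(S) = \top$ otherwise; this is a Galois connection. For $\psi \equiv X(0)$ one computes $F(R) = \{0\}$ and $F^{\#}(Y) = \top$ for every $Y$, so $\mu$ is interpreted as all of $M$, yet $\chi(x) \equiv (x=0)$ satisfies $(\calM,G) \modelsa \psi\unsubst{X}{\chi}$ while $\forall x\,(\mu(x) \impl \chi(x))$ fails whenever $|M| \ge 2$. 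The missing hypothesis is the one you anticipate: if each $C_j$ is required to lie in the image of $\gamma_{k_j}$ (equivalently $\gamma_{k_j}(\alpha_{k_j}(C_j)) = C_j$), then writing $C_j = \gamma_{k_j}(S_j)$ your own pre-fixed-point argument from part~1 applies verbatim and yields $\gamma(\mu^{\#}) \subseteq \mybar{C}$ componentwise. So the gap you locate is genuine, but it is a gap in the conjecture's formulation rather than an oversight in your argument; your write-up would be stronger if it recorded the counterexample and the repaired hypothesis explicitly instead of leaving the choice between readings open.
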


\begin{example}
In continuation of Example~\ref{ex.affone} consider an $\Laff$ formula equation $\exists \mybar{X}\, 
 \forall^* \psi$ where $\psi$ is a quantifier-free formula.
We want to decide whether there are formulas $\mybar{\chi}$ s.t.\ $\Rat \models \psi[\mybar{X}\backslash \mybar{\chi}]$
 and $\chi_1,\ldots,\chi_n$ are conjunctions of affine equations, i.e., they define affine subspaces of
 $\Rat^k$ in the model $\Rat$.
As in~\cite{Hetzl20Decidability} we reduce the solvability of $\exists \mybar{X} \forall^*\, \psi$ to
 solvability of one of its finitely many projections which are Horn formula equations.
Let $\exists \mybar{X} \forall^*\, \varphi$ be one of them, then an application
 of Conjecture~\ref{conj.afpthm} yields a tuple of FO[LFP]-formulas $\mybar{\mu}$ s.t.\ 
 $(\Rat,G_{\aff}) \models_a \exists \mybar{X}\, \varphi \leftrightarrow \varphi\unsubst{\mybar{X}}{\mybar{\mu}}$.
Since all lattices $\Aff \Rat^k$ have finite height we can compute fixed-point free formulas $\mybar{\chi}$
 equivalent to $\mybar{\mu}$ from $\mybar{\mu}$ and therefore $(\Rat,G_{\aff}) \models_a \exists \mybar{X}\, \varphi \leftrightarrow \varphi\unsubst{\mybar{X}}{\mybar{\chi}}$.
Now $\varphi\unsubst{\mybar{X}}{\mybar{\chi}}$ is a first-order formula and hence $(\Rat,G_{\aff}) \models_a \varphi\unsubst{\mybar{X}}{\mybar{\chi}}$
 iff $\Rat \models \varphi\unsubst{\mybar{X}}{\mybar{\chi}}$.
The latter statement can now be checked by a decision procedure for linear arithmetic.
\end{example}

\section{Conclusion}

We have shown a fixed-point theorem for Horn formula equations and applied it to dual Horn formula equations
 and linear Horn formula equations.
The proof of this result essentially consists of expressing the construction of a minimal model
 of a set of Horn clauses, which is familiar from logic programming, on the object level as a formula
 in first-order logic with least fixed points, thus providing
 a canonical solution to a Horn formula equation in FO[LFP].

Note that Theorem~\ref{thm.HornFP} applies equally to constraints $\varphi$ being FO[LFP]-formulas.
It therefore shows that FO[LFP], in contrast to first-order logic, has the property of being
 closed under solving Horn formula equations.
It thus shows that in FO[LFP] validity and solvability of Horn formula equations coincide.
This is in contrast to formula equations in first-order logic, cf.~Example~\ref{ex.feq}.

Explicit fixed-point operators have been employed for second-order quantifier elimination
 in the $\mathrm{DLS}^*$ algorithm introduced in~\cite{Nonnengart98Fixpoint,Doherty98General}.
In this algorithm Ackermann's lemma is generalised to a fixed-point lemma that covers additional situations.
While the $\mathrm{DLS}^*$ algorithm as such will fail on Horn formula equations due to its priority
 of shifting universal quantifiers inwards, the fixed-point lemma of the $\mathrm{DLS}^*$ algorithm
 could be used for solving Horn formula equations with a single variable when combined with
 a different preprocessing.
However, for Horn formula equations with more than one predicate variable it would result
 in solutions with iterated fixed points.
In this sense our fixed-point theorem can be considered a generalisation of the fixed-point lemma
 of~\cite{Nonnengart98Fixpoint}.
On the other hand the fixed-point lemma of~\cite{Nonnengart98Fixpoint} is not restricted to
 Horn formula equations.

These fixed-point theorems contribute to our theoretical understanding of the logical foundations of
 constrained Horn clause solving and software verification: 
Theorem~\ref{thm.linHorn} and Corollary~\ref{cor.hoareipol} help to explain, from a theoretical
 point of view, the efficacy of interpolation for Horn clause solving and invariant generation respectively.
Moreover, as corollary to our fixed point theorem we have obtained the expressibility of the 
 weakest precondition and the strongest postcondition, and thus the partial correctness of an imperative
 program in FO[LFP].

As shown in~\cite{Kloibhofer20Fixed}, our fixed-point theorem has a number of further
 corollaries in a variety of application areas: it allows a generalisation of a result by
 Ackermann~\cite{Ackermann35Untersuchungen} on second-order quantifier-elimination in a direction different
 from the recent generalisation~\cite{Wernhard17Approximating} of that result.
It allows to obtain a result on the generation of a proof with induction based on partial information
 about that proof shown in~\cite{Eberhard15Inductive} as straightforward corollary.
Last, but not least, an abstract form of the fixed-point theorem, Conjecture~\ref{conj.afpthm}, would
 permit to considerably simplify the proof of the decidability of affine formula
 equations given in~\cite{Hetzl20Decidability}.
 
In conclusion, we believe that it is fruitful to consider constrained Horn clause solving from the more general
 point of view of solving formula equations.
On the theoretical level this perspective uncovers connections to a number of topics such as
 second-order quantifier elimination and results such as Ackermann's~\cite{Ackermann35Untersuchungen}.
On the practical level it suggests to study the applicability of algorithms such as DLS and SCAN for
 constrained Horn clauses and vice versa, that of algorithms for constrained Horn clause solving for
 applications of second-order quantifier elimination.

{\bf Acknowledgements.} The authors would like to thank Christoph Wernhard for a number of helpful
 conversations about formula equations and second-order quantifier elimination and the anonymous
 reviewers for many useful suggestions that have improved our work on this subject.
 
\bibliographystyle{eptcs}
\bibliography{references}

\end{document}